\documentclass[a4paper,11pt]{amsart}

\usepackage{latexsym}
\usepackage{amssymb,amsmath}
\usepackage{graphicx, caption, chngpage}
\usepackage{url}
\usepackage[vcentermath]{youngtab}
  
\usepackage{booktabs,footmisc} 
\usepackage{tikz}
\usepackage{qcircuit}

\usepackage{zx-calculus}

\allowdisplaybreaks
\newtheorem{theorem}{Theorem}[section]
\newtheorem*{theorem*}{Theorem}

\newtheorem{lemma}[theorem]{Lemma}
\theoremstyle{definition}
\newtheorem{definition}[theorem]{Definition}
\newtheorem{remark}[theorem]{Remark}

\newtheorem{example}[theorem]{Example}
\newtheorem{exercise}[theorem]{Exercise}

\newcommand{\transpose}{\mathrm{t}}
\DeclareMathOperator{\tr}{tr}

\DeclareMathOperator{\CNOT}{CNOT}
\DeclareMathOperator{\SWAP}{SWAP}

\newcommand{\pl}{{}\!\hskip-0.5pt+\hskip-0.5pt\!{}}
\newcommand{\mi}{{}\!\hskip-0.5pt-\hskip-0.5pt\!{}}

\newcommand{\Z}{\mathbb{Z}}

\newcommand{\R}{\mathbb{R}}

\newcommand{\PP}{\mathbb{P}}

\newcommand{\CC}{\mathcal{C}}

\newcommand{\C}{\mathbb{C}}

\newcommand{\F}{\mathbb{F}}

\renewcommand{\epsilon}{\varepsilon}

\DeclareMathOperator{\Sym}{Sym}

\renewcommand{\theta}{\vartheta}

\newcommand{\SU}{\mathrm{SU}}
\newcommand{\su}{\mathsf{su}}
\newcommand{\SO}{\mathrm{SO}}
\newcommand{\so}{\mathsf{so}}

\newcommand{\mfrac}[2]{{\textstyle\frac{#1}{#2}}}

\newcounter{thmlistcnt}
	{\setcounter{thmlistcnt}{0}%
	\begin{list}{\emph{(\roman{thmlistcnt})}}{%
		\usecounter{thmlistcnt}%
		\setlength{\topsep}{0pt}%
		\setlength{\leftmargin}{0pt}%
		\setlength{\itemsep}{0pt}%
		\setlength{\labelwidth}{17pt}
		\setlength{\itemindent}{30pt}}%
	}%
	{\end{list}}%
	
\newcounter{exlistcnt}
\newenvironment{exlist}%
	{\setcounter{exlistcnt}{0}%
	\begin{list}{(\alph{exlistcnt})}{%
		\usecounter{exlistcnt}%
		\setlength{\topsep}{0pt}%
		\setlength{\leftmargin}{0pt}%
		\setlength{\itemsep}{0pt}%
		\setlength{\labelwidth}{12pt}
		\setlength{\itemindent}{18pt}}%
	}%
	{\end{list}}%	

\newcounter{exlistBcnt}
\newenvironment{exlistB}%
	{\setcounter{exlistBcnt}{0}%
	\begin{list}{(\alph{exlistBcnt})}{%
		\usecounter{exlistBcnt}%
		\setlength{\topsep}{0pt}%
		\setlength{\leftmargin}{0pt}%
		\setlength{\itemsep}{0pt}%
		\setlength{\labelwidth}{17pt}
		\setlength{\itemindent}{30pt}}%
	}%
	{\end{list}}%
	
\newcommand{\ket}[1]{\left|#1\right>}
\newcommand{\bra}[1]{\left<#1\right|}
\newcommand{\braket}[2]{\left< #1|#2\right>}

\newcommand{\HH}{\mathcal{H}}

\numberwithin{equation}{section}
\usepackage{hyperref}

\newcommand{\vecn}[1]{{\hskip1pt\widehat{\mathbf{#1}}}}

\makeatletter
\let\old@footnotetext\@footnotetext
\renewcommand{\@footnotetext}[1]{%
    \old@footnotetext{%
    \linespread{0.98}\selectfont % <--- Adjust "0.8" to change spacing
    #1%
    }%
}
\makeatother

\newcommand{\rX}{{\red X}}
\newcommand{\rZ}{{\red Z}}

\renewcommand{\Re}{\hskip2pt\mathrm{Re}\,}
\renewcommand{\Im}{\hskip2pt\mathrm{Im}\,}

\newcommand{\red}[1]{\color[rgb]{1,0.0,0.0}#1}

\newcommand{\caret}{\texttt{\char`\^}}

\begin{document}

\begin{abstract}
These notes introduce quantum computation and quantum error correction, emphasising
the importance of stabilisers and the mathematical foundations in basic Lie theory.
We begin by using the double cover map $\mathrm{SU}_2 \rightarrow \mathrm{SO}_3(\mathbb{R})$
to illustrate the distinction between states and measurements for a single qubit.
We then discuss entanglement and CNOT gates, the Deutsch--Jozsa Problem, and finally 
quantum error correction, using the Steane $[[7,1,3]]$-code as the main example.
The necessary background physics of unitary evolution and Born rule measurements is
developed as needed. The circuit model is used throughout.
\end{abstract}

\title[Quantum computation and quantum error correction]{Quantum computation and quantum error correction: the theoretical minimum${}^\star$ \\[-2pt] \scalebox{0.8}{\textbf{${}^\star$with answers and other extras in the footnotes}}}
\author{Mark Wildon}
\date{February 2026, \emph{Email for comments:} \texttt{mark.wildon@bristol.ac.uk}}

\maketitle
\thispagestyle{empty}

% Add why unitary: probability conserved but really its Schrödinger, d|\psi>/dt = i/hbar Ht, one parameter subgroup, exp of anti-Hermitian
% is unitary

% No cloning: and apparent cloning of Z-basis states, e.g. \alpha|0> + |\beta|1>|0> -> \alpha |00> + \beta|11>.
% Maybe reference Quantum in Pictures

\vspace*{-12pt}

\noindent\textbf{Introduction.} Maybe you know that atoms bond together into molecules by sharing pairs of electrons.
In quantum language, electrons are qubits, or elements of $2$-dimensional Hilbert space,
and one type of bond is the Bell pair made by
applying a CNOT gate with control qubit $\ket{+}$ 
to a target qubit $\ket{0}$ (see \S\ref{subsec:twoQubits}).
In this article we use 
controlled gates, most importantly the CNOT gate, to prove  that quantum
computers are strictly more powerful than classical computers. 
We introduce the key idea
of `measuring a stabiliser' 
and show how it 
can be used to implement quantum error correction, making quantum computation a practical 
possibility.\footnote{\textbf{Lie:} the early parts of these notes are
crawling with \raisebox{-3pt}{\begin{tikzpicture} \draw[thick] (-0.3,-0.05)--(0.3,-0.05); \node at (0,0) {lies}; \end{tikzpicture}} 
\!\! deliberate oversimplifications, the worst of which are mentioned in numbered 
footnotes like this. I hope you will be impressed by the number so far:
elements of Hilbert space proportional by a non-zero scalar define
the same physical state;
qubits are not necessarily electrons; a more typical bond is the spin singlet
state $\mfrac{1}{\sqrt{2}}\bigl(\ket{01} - \ket{10})$, which is the Bell pair 
\smash{$\mfrac{1}{\sqrt{2}}\bigl( \ket{00} +\ket{11}\bigr)$} with the $XZ$-operator applied (see 
Exercise~\ref{ex:BellObservables});
quantum computers are only proved to be more powerful
than classical computers \emph{relative to an oracle}; error correction only
works if the  raw qubits are not \emph{too} error prone. }%Other models of EC do not need ancilla states 
% but we didn't actually say ancillae were the only way

\subsubsection*{Outline:} \S\ref{sec:oneQubit}, \S\ref{sec:twoQubits}, \S\ref{sec:manyQubits},
\S\ref{sec:tooManyQubits} deal respectively with one qubit (basics), two qubits (CNOT gates, entanglement),
many qubits (quantum computing, including the Quantum Discrete Fourier Transform 
for Boolean functions) and `too many' qubits (quantum error correction).

\section{One qubit}\label{sec:oneQubit}

\subsection{The absolute minimum}\label{subsec:oneQubitMinimum}
A qubit is an element of $2$-dimensional Hilbert space $\HH = \C^2$. We fix the
orthonormal basis $\ket{0}$, $\ket{1}$ of $\HH$. Thus
\[ \HH = \bigl\{ \alpha \!\ket{0} + \beta \!\ket{1} : \alpha, \beta \in \C \bigr\} \]
and $\braket{0}{0} = \braket{1}{1} = 1$, $\braket{0}{1} = \braket{1}{0} = 0$.

\begin{remark}
Do not be scared by the ket notation. As a rough guide, symbols of the same type, such as `$00$' and `$01$'
define orthonormal kets in the same basis, while
symbols of different types, such as `$0$' and `$+$', define kets in different bases and generally
are not orthogonal. For instance
the inner product of $\ket{0}$ and $\ket{+}$ is given by flipping $\ket{0}$ so that it becomes $\bra{0}$
and then putting it next to $\ket{+}$, getting $\braket{0}{+}$, which is of course \smash{$\mfrac{1}{\sqrt{2}}$}.
Since bras such as \smash{$\bra{0}$} are elements of the dual space \smash{$\HH^\star$},
our inner product must be defined so that
the conjugation is on the left-hand side:
$\braket{\alpha v}{\beta w} = \overline{\alpha}\beta \braket{v}{w}$.\footnote{\textbf{Aside:}
physicists will put anything and everything into a ket. For instance they like to write $\ket{x}$ for the state `the particle is certainly at position $x \in \R$',
and will then happily apply an operator to this eigenstate
to extract the eigenvalue $x$. Sometimes
this operator is also denoted $x$, although 
it should be admitted that $\widehat{x}$ is
more common. Thus
$\widehat{x} \ket{x} = x \ket{x}$ is not a tautological string of barely
meaningful symbols, but instead a deep physical truth about measurement of quantum states.
Worse still, in the continuous setting, $\ket{x}$ is not really defined at all,
since it would have to be some kind of Dirac delta `function', and so not a member
of any reasonable Hilbert space. None of this matters.}
\end{remark}

Quantum computers work by applying unitary linear maps to $\HH$ and its tensor powers. These unitary linear maps are
called \emph{gates}.\footnote{\textbf{Lie:} we will get to measurement shortly. This \emph{circuit model}
is not the only model for quantum computation, but all reasonable models define
the same computational class~$\mathsf{BQP}$. See \S\ref{subsec:BQP} for more on $\mathsf{BQP}$, footnote~\ref{footnote:SolovayKitaev} for the
Solovay--Kitaev Theorem and \S\ref{subsec:whyUnitary}
for why gates are unitary.} For instance, the $Z$-\emph{gate} and $X$-\emph{gate} act on the single qubit space $\mathcal{H}$
and have matrices 
\[ Z = \left( \begin{matrix} 1 & 0 \\ 0 & -1 \end{matrix} \right), \quad 
   X = \left( \begin{matrix} 0 & 1 \\ 1 & 0 \end{matrix}
\right) \]
in the basis $\ket{0}$, $\ket{1}$. Observe that $\ket{0}$ is a $+1$-eigenvector of $Z$ and $\ket{1}$ is a $-1$-eigenvector
of $Z$. For this reason, we call $\ket{0}, \ket{1}$ the \emph{$Z$-basis} of $\HH$.
Similarly 
\[ \ket{+} = \frac{1}{\sqrt{2}} \bigl( \ket{0} + \ket{1} \bigr),
\quad \ket{-} = \frac{1}{\sqrt{2}} \bigl( \ket{0} - \ket{1} \bigr) \]
form an orthonormal basis of $\HH$ of $X$-eigenvectors, with eigenvalues $+1$ and $-1$;
this is the \emph{$X$-basis}. The $X$-gate is the analogue of the classical NOT gate, flipping between
$\ket{0}$ and $\ket{1}$,
while the $Z$-gate, which introduces a phase from the minus sign in $Z\ket{1} = -\ket{1}$, 
has no analogue in classical computing.

The $Z$- and $X$-bases are switched by the unitary %but not special unitary 
\emph{Hadamard} gate
\[ H = \frac{1}{\sqrt{2}} \left( \begin{matrix}  1 & 1 \\ 1 & -1\end{matrix} \right). \]
Note that $H\ket{0} = \ket{+}$, $H\ket{1} = \ket{-}$, and 
$H\ket{+} = \ket{0}$ and $H\ket{-} = \ket{1}$. Thus~$H$
has order $2$.

\begin{definition}[$Z$-basis measurement]\label{defn:measureZ}
Let $\alpha\!\ket{0} + \beta\! \ket{1}$ be a qubit normalized so that
$|\alpha|^2 + |\beta|^2 = 1$.
\emph{Measuring}  $\alpha\!\ket{0} + \beta\! \ket{1}$ in the $Z$-basis projects
it to $\ket{0}$
with probability $|\alpha|^2$ and to $\ket{1}$ with probability $|\beta|^2$.
The measurement result is $0$ or $1$, respectively.
\end{definition}

This is a special case of the \emph{Born rule}. Note that we need normalized
states in order for the probabilities as defined to sum to $1$.
Please pause for a moment to note that \emph{measurement changes the state}.
This already shows that \emph{states are not the same as measurement}.\footnote{\textbf{Isn't this obvious?}\label{footnote:statesMeasurements}
Mathematically we already know states are not measurements, 
because states are elements of $\mathcal{H}$ (or its tensor
powers) and measurements are certain projections.
But the emphasis is deserved
because it is tempting to relapse into 
a classical world that elides this `type level' distinction.
I can imagine a first lecture on classical physics that
begins `Physics is about physical states, i.e.~what we can measure
\ldots '.}
This is one of the three basic ways in which quantum physics differs from classical physics.
(The other two are superposition and entanglement; all three are critical to quantum computation.) We will see in Theorem~\ref{thm:QDFT}
and \S\ref{subsec:measuringStabiliser} how this `collapse of the wave function on measurement' 
is exploited in quantum computation and in quantum error correction.

After an act of measurement, you learn the result: this is a classical bit~$0$ or $1$,
depending on whether the new quantum state is $\ket{0}$ or $\ket{1}$.
In the computational setting, imagine a wire connecting the quantum computer
to an ordinary classical computer that carries all these measurement results.
As a quick exercise, what does $X$-basis measurement do?\footnote{\textbf{Answer:}
it projects to $\ket{+}$ and $\ket{-}$ and after measurement you learn $+$ if the
new state is $\ket{+}$ and $-$ if the new state is $\ket{-}$.}
What happens if you measure
the qubit $\ket{0}$ in the $X$-basis and then in the $Z$-basis?\footnote{\textbf{Answer:} 
the $X$-basis measurement gives $\ket{+}$ and $\ket{-}$ with equal probability (and you learn which); the $Z$-basis
measurement then gives $\ket{0}$ and $\ket{1}$ again with equal probability (and again you learn which).}
 Well done, now you do not have to read the rest of this section, which is by far the hardest
 part of these notes.  Please skip to \S\ref{sec:twoQubits}.

\subsection{The Stern--Gerlach experiment${}^\star$}
Subsections marked $\star$ should be skipped. Seriously, haven't you been warned enough?
The diagram below show the quantum circuit abstraction of the Stern--Gerlach experiment,
in which 
a spin up qubit (mathematically $\ket{0}$) is put through a splitter (mathematically
the Hadamard gate $H$). Its spin is then measured by a gadget
implementing the $Z$-basis measurement in
Definition~\ref{defn:measureZ}, as shown by the meter.
Since just before measurement the qubit is in the plus state
\smash{$\mfrac{1}{\sqrt{2}} \bigl( \ket{0} +
\ket{1}\bigr)$}, Definition~\ref{defn:measureZ} implies that
the experimental results show an even split between
measurements of $0$ and $1$.
\[ \Qcircuit{& \llap{$\ket{0}$}\  & \gate{H} & \meter{} } \]

Consistent with Definition~\ref{defn:measureZ}, exactly the same statistics are observed if we instead start
with spin down qubits (mathematically $\ket{1}$) which are transformed by the splitter
to the minus state \smash{$\mfrac{1}{\sqrt{2}} \bigl( \ket{0} -
\ket{1}\bigr)$}.
Now suppose that we put the measured qubit through a second splitter, and measure
again, as shown below
\[ \Qcircuit{& \llap{$\ket{0}$}\  & \gate{H} & \meter{} & \gate{H} & \meter{} } \]
Again it follows
easily from Definition~\ref{defn:measureZ} that both measurements give
an even split between $\ket{0}$ and $\ket{1}$. % half the time and $\ket{1}$ half the time. 
But now suppose 
we do not observe the qubit in between the two splitters, so the relevant circuit is now
 \[ \Qcircuit{& \llap{$\ket{0}$}\  & \gate{H} & \qw  & \gate{H} & \meter{} } \]
What now are the experimental results?
Correct: because $H^2 = I$, we always measure $\ket{0}$. 
The  Stern--Gerlach experiment shows
that  a \emph{superposition}, such as the
plus state \smash{$\mfrac{1}{\sqrt{2}} \bigl( \ket{0} +
\ket{1}\bigr)$} of the unmeasured qubit between the two splitters, is not `secretly' 
either $\ket{0}$ or $\ket{1}$ --- since if so, the final
experimental results would show the same even split between $\ket{0}$
and $\ket{1}$  ---
but a \emph{different physical state}. This contradicts classical physics.
Note we needed only a single qubit and a $2$-dimensional Hilbert space to do all this:
no continuous wave functions or double-slit experiments were required!\footnote{\textbf{Reference:}
see \cite{RudolphQ} for a rigorous introduction to quantum computation, written for the lay reader
that, no coincidence, continues along these lines. For more quantum circuit abstractions of 
important physics experiments, including a fascinating discussion of Wigner's Friend, see
Maria Violaris' videos: \url{https://www.youtube.com/watch?v=TMBK88Mpg5U}.}

\subsection{Measuring spin in any direction${}^\star$}\label{subsec:lifeDifficult}
For this subsection we shall suppose that qubits are particles such as electrons
which have a \emph{spin}, namely an axis in $\R^3$ and a spin magnitude,
either $+\mfrac{1}{2}$ or $-\mfrac{1}{2}$.
In this
case the Stern--Gerlach measuring apparatus is a pair of magnets.
To measure spin about the axis in the
direction of the normalized vector
$\vecn{n} \in \R^3$, you orient the pair
in the direction of $\vecn{n}$, and observe whether the particle
is pulled in the direction $\vecn{n}$ `forwards', indicating spin $+\mfrac{1}{2}$, 
or in the direction $-\vecn{n}$ `backwards', indicating spin $-\mfrac{1}{2}$.\footnote{\textbf{Physical details:} the
Stern--Gerlach experiment is typically performed with silver atoms because 
they are easier to manage in the laboratory than electrons (cathode rays). The reason
silver is an acceptable substitute
is that silver has 
atomic number $47 = 2 + 8 + 8 + 18 + 10 + 1$,
filling up the $1s, 2s, 2p, 3s, 3p, 3d, 4s, 4p, 4d$ shells, leaving one electron
all on its own in the outermost $5s$ shell. (This is an exception to the Aufbau
rule, that predicts $\ldots 4d^{9} 5s^2$: the configuration adopted by silver is lower energy.)
Because all the inner electrons resonate  with each other,
the magnetic properties of silver are dominated by the spin of this single outer electron. 
To measure spin in  direction $\vecn{z}$, 
take a strong top magnet and a weaker bottom magnet, making a magnetic field
oriented in the $\vecn{z}$ direction. Spin $\mfrac{1}{2}$ particles entering the field
align their spin with the $\vecn{z}$ direction, and because the field is inhomogeneous,
particles with spin \smash{$+\mfrac{1}{2}$} are pulled up (`forward', project to $\ket{0}$, measure $0$) and 
particles with spin \smash{$-\mfrac{1}{2}$} are pulled down (`backwards', project to $\ket{1}$, measure $1$).
The deflection is quantized, and is always a multiple by $+1, -1, +2, -2$ etc,
of a minimum deflection; this minimum deflection
is determined by  Planck's constant and the magnetic field strength.
(For silver atoms, there are no excited states and only two deflections are observed, one up and one down.)
This is already not consistent with classical physics, which says that there should exist particles
with spin very weakly aligned in the  $\vecn{z}$ direction
that are deflected up by a very small but non-zero amount.}

\begin{tikzpicture}

\end{tikzpicture}

\subsubsection*{Measurement in a rotated direction}
Measurement in direction $\vecn{z}$ is observed experimentally to result
in `up', $+1$, `forward' on particles in state $\ket{0}$ and
 `down', $-1$, `backward' on particles in state $\ket{1}$.
(We may take this as the \emph{definition} of `spin up' and `spin down', and ignore
that the correct quantized unit is $\hbar/2$.)
Recall that the $Z$-matrix has $1$-eigenvector $\ket{0}$, and $-1$-eigenvector $\ket{1}$ forming the $Z$-basis.
If we rotate the apparatus by a small angle $\theta$ about the $y$-axis, so that it now points in the 
direction $\cos \theta \vecn{z} + \sin \theta \vecn{x}$,
then the $Z$ matrix `rotates' to \marginpar{\raisebox{6pt}{ $Z = \left(\begin{matrix} 1 & 0 \\ 0 & -1 \end{matrix}
\right)$}}
\[ R = \left( \begin{matrix} \cos \theta & \sin \theta \\ \sin \theta & -\cos \theta
\end{matrix} \right) = (\cos \theta) Z + (\sin \theta) X \]
Note that $R$ is still a Hermitian matrix and from the determinant and trace, you can see
it still has eigenvalues $+1$ and $-1$; moreover it shows
 the only
way to evolve $Z$ by a one-parameter subgroup so that when $\theta = \mfrac{\pi}{2}$ it becomes $X$.\footnote{\textbf{Aside:} the
author has thought about variations of this argument for over four years and still
cannot decide whether this is (a) a correct, physically motivated argument
using the isotropy of space, that shows that the Pauli matrices behave like `vectors'
in that measurement in direction $\cos \theta \vecn{z} + \sin\theta \vecn{x}$
corresponds to the Hermitian matrix $\cos \theta Z + \sin \theta X$ \emph{or}
(b) a complete cheat.} The calculations
\begin{align*}
 \left( \begin{matrix} \cos \theta & \sin \theta \\[3pt] \sin \theta & -\cos \theta 
\end{matrix} \right) \left( \begin{matrix} \cos \mfrac{\theta}{2} \\[3pt] \sin \mfrac{\theta}{2}
\end{matrix} \right) &= 
\left(\begin{matrix} \cos \theta \cos \mfrac{\theta}{2} + \sin \theta \sin \mfrac{\theta}{2} \\[3pt]
\sin \theta \cos \mfrac{\theta}{2} - \cos \theta \sin \mfrac{\theta}{2} \end{matrix} \right) 
= \left( \begin{matrix} \cos \mfrac{\theta}{2} \\[3pt] \sin \mfrac{\theta}{2}
\end{matrix}
\right)  \\
 \left( \begin{matrix} \cos \theta & \sin \theta \\[3pt] \sin \theta & -\cos \theta 
\end{matrix} \right) \left( \begin{matrix} \sin \mfrac{\theta}{2} \\[3pt] -\cos \mfrac{\theta}{2}
\end{matrix} \right) &= 
\left(\begin{matrix} \cos \theta \sin \mfrac{\theta}{2} - \sin \theta \cos \mfrac{\theta}{2} \\[3pt]
\sin \theta \sin \mfrac{\theta}{2} + \cos \theta \cos \mfrac{\theta}{2} \end{matrix} \right) 
= \left( \begin{matrix} -\sin \mfrac{\theta}{2} \\[3pt] \cos \mfrac{\theta}{2}
\end{matrix}
\right) 
\end{align*}
show that $R$ has eigenvector $\cos \mfrac{\theta}{2} \ket{0} + \sin \mfrac{\theta}{2} \ket{1}$
with eigenvalue $1$ and eigenvector 
$-\sin \mfrac{\theta}{2} \ket{0} + \cos \mfrac{\theta}{2} \ket{1}$
with eigenvalue $-1$. 
These are eigenstates that are the two possible results of measurement in
the direction $\cos \theta \vecn{z} + \sin \theta \vecn{x}$.
By the Born rule, starting with $\ket{0}$, the probability of measuring
$+1$ `forwards' and $-1$ `backwards' are
\[ \bigl| \cos \mfrac{\theta}{2} \braket{0}{0} + \sin \mfrac{\theta}{2}\braket{ 1}{0} \bigr|^2 
= \cos^2\! \mfrac{\theta}{2}, \quad
\bigl| -\sin \mfrac{\theta}{2} \braket{0}{0} + \cos \mfrac{\theta}{2}\braket{ 1}{0} \bigr|^2 
= \sin^2\! \mfrac{\theta}{2}, \]
respectively. This is both a theoretical prediction and an experimental fact.
%\[ \braket{\cos \mfrac{\theta}{2} + \sin

\subsubsection*{From Stern--Gerlach to $\SU_2 \rightarrow \SO_3(\R)$}
Imagine rotating the apparatus very slowly, and repeatedly performing measurements.
Occasionally, we'll be unlucky and measure `backwards', in which case we start again, but
we can safely assume that all measurements are `forwards'.\footnote{\textbf{Not a lie:} by the Born rule,
the probability
of a `backwards' measurement after a rotation by $\mfrac{\theta}{N}$ 
is \smash{$\sin^2 \!\mfrac{\theta}{N}$ which is at most $\mfrac{\theta^2}{N^2}$} and so
by a union bound the probability of a `backwards' measurement in $N$ steps 
is at most $\theta^2/N$. %\smash{$\mfrac{\theta^2}{N}$}. 
Thus by rotating sufficiently slowly (not forgetting to measure after each tiny rotation)
we can make the chance of a `backwards' measurement --- corresponding in other settings to a 
dissipation of the quantum state --- arbitrarily small.}

\begin{exercise}\label{ex:spin}
Suppose we rotate the apparatus in total by an angle $\theta$. What is the new quantum state?
What unitary operator $U$ corresponds to this evolution of the starting state $\ket{0}$?
What is the conjugate of $Z$ by this~$U$? Go on to discover the double cover $\SU_2
\rightarrow \SO_3(\R)$ and interpret everything so far with Lie algebras.\footnote{\textbf{Answer:} the new quantum state is the
eigenstate $\cos \mfrac{\theta}{2} \ket{0} + \sin \mfrac{\theta}{2} \ket{1}$ \label{footnote:adjointRep}
above
and 
\[ U = \left(\begin{matrix} \cos \mfrac{\theta}{2} & -\sin \mfrac{\theta}{2} 
\\[3pt] \sin \mfrac{\theta}{2} & \cos \mfrac{\theta}{2} \end{matrix} \right).\]
Calculation shows that 
\begin{align*} UZU^{-1} &=
\left(\begin{matrix} \cos \mfrac{\theta}{2} & -\sin \mfrac{\theta}{2} 
\\[3pt] \sin \mfrac{\theta}{2} & \cos \mfrac{\theta}{2} \end{matrix} \right)
\left(\begin{matrix} 1 & 0 \\ 0 & -1\end{matrix}\right)
\left(\begin{matrix} \cos \mfrac{\theta}{2} & \sin \mfrac{\theta}{2} 
\\[3pt] -\sin \mfrac{\theta}{2} & \cos \mfrac{\theta}{2} \end{matrix} \right) \\
&= \left(\begin{matrix} \cos \mfrac{\theta}{2} & \sin \mfrac{\theta}{2} \\[3pt]
\sin \mfrac{\theta}{2} & -\cos \mfrac{\theta}{2} \end{matrix} \right)
\left(\begin{matrix} \cos \mfrac{\theta}{2} & \sin \mfrac{\theta}{2} 
\\[3pt] -\sin \mfrac{\theta}{2} & \cos \mfrac{\theta}{2} \end{matrix} \right)
\\ &= \left( \begin{matrix} \cos^2 \mfrac{\theta}{2} - \sin^2 \mfrac{\theta}{2} &
2 \cos \mfrac{\theta}{2} \sin  \mfrac{\theta}{2} \\[3pt] 2 \cos  \mfrac{\theta}{2}\sin  \mfrac{\theta}{2} &
\sin^2  \mfrac{\theta}{2} - \cos^2  \mfrac{\theta}{2} \end{matrix} \right) \\ 
&= \left(\begin{matrix} \cos \theta & \sin \theta \\ \sin \theta & -\cos \theta \end{matrix}\right) 
\end{align*}
 Thus when a quantum state
 transforms by $U$, physical measurements (i.e.~Hermitian matrices) transform by conjugation by $U$,
 and the calculation above shows that measurement in the $\vecn{z}$ direction transforms
 to measurement in the $\cos \theta \vecn{z} + \sin \theta \vecn{x}$ direction.
 (We agreed this was the correct physical interpretation of the final matrix, which is $R$ from earlier: see footnote~8.)
More mathematically, quantum states transform by the natural representation
of $\SU_2$ and measurements transform by the adjoint representation of $\SU_2$,
acting by conjugacy on Hermitian matrices.

We carry on to do the calculations needed for \S\ref{subsec:Bloch}.
Restricting the adjoint action of $\SU_2$ to
the subspace $\langle X, Y, Z \rangle$ of $2 \times 2$ complex matrices spanned by the Pauli matrices, 
we get the double cover $\SU_2 \rightarrow
\SO_3(\R)$. The image is in the orthogonal group $\SO_3(\R)$ because conjugation preserves the inner product on 
$\langle Z, X, Y \rangle_\mathbb{R}$ defined by $(P, Q) = \mfrac{1}{2}\tr PQ$
having $Z, X, Y$ as an orthonormal~basis.
The more general version of the calculation
above needs~$U$ defined so that $U\! \ket{0} = \alpha \!\ket{0} + \beta\! \ket{1}$, 
for arbitrary $\alpha, \beta \in \C$ with $|\alpha|^2 + |\beta|^2 = 1$. To make $U$ unitary we take
$U = \left( \begin{matrix} \alpha & -\overline{\beta} \\ \beta & \overline{\alpha} 
\end{matrix} \right)$
and then
\[ UZU^{-1} \!=\! \left( \begin{matrix} \alpha & -\overline{\beta} \\ \beta & \overline{\alpha} 
\end{matrix} \right) 
Z \left( \begin{matrix} \overline{\alpha} & \overline{\beta} \\ -\beta & \alpha 
\end{matrix} \right) 
\!=\! \left(\begin{matrix} \alpha & \overline{\beta} \\ \beta & -\overline{\alpha} \end{matrix}\right)
\left( \begin{matrix} \overline{\alpha} & \overline{\beta} \\ -\beta & \alpha 
\end{matrix} \right) 
\!=\! \left( \begin{matrix} |\alpha^2| - |\beta|^2 & 2\alpha \overline{\beta} \\
2\overline{\alpha} \beta & - |\alpha|^2 + |\beta|^2  \end{matrix} \right)
\]
shows that $UZU^{-1} = a_Z Z + a_X X + a_Y Y$ 
where $a_Z = |\alpha|^2 - |\beta|^2$,
$a_X = 2\Re \alpha\overline{\beta}$ and $a_Y = -2\Im \alpha\overline{\beta}$.
Since $Z$ has $\ket{0}$ as its $1$-eigenstate, the conjugate $UZU^{-1}$ 
has  $U\ket{0} =\alpha \!\ket{0} + \beta\! \ket{1}$ as its $1$-eigenstate.
Thus, in the measurement representation, $\vecn{z} \mapsto a_Z \vecn{z} + a_X \vecn{x} + a_Y \vecn{y}$.
Similar calculations conjugating the Pauli $X$ and $Y = iXZ$ matrices show that
\begin{align*} U X U^{-1} &= 
 \left(\begin{matrix}  -\overline{\beta} &\alpha \\  \overline{\alpha} & \beta \end{matrix}\right)
\left( \begin{matrix} \overline{\alpha} & \overline{\beta} \\ -\beta & \alpha 
\end{matrix} \right) 
=
\left( \begin{matrix} -2 \Re \alpha \beta & \alpha^2 - \overline{\beta}^2 \\
\overline{\alpha}^2 - \beta^2 & 2 \Re \alpha \beta \end{matrix} \right) \\
U Y U^{-1} &= 
\left(\begin{matrix}  -i \overline{\beta} & -i\alpha \\  i\overline{\alpha} & -i\beta \end{matrix}\right)
\left( \begin{matrix} \overline{\alpha} & \overline{\beta} \\ -\beta & \alpha 
\end{matrix} \right) 
=
\left( \begin{matrix} -2 \Im \alpha \beta & -i(\alpha^2 + \overline{\beta}^2) \\
i(\alpha^2 + \overline{\beta}^2)  & 2 \Im \alpha \beta \end{matrix} \right) 
 \end{align*}
and so, using the coefficients of $Z$, $X$, $Y$ as the column of the matrix, we get the explicit double-cover map
\[ \left( \begin{matrix} \alpha & -\overline{\beta} \\ \beta & \overline{\alpha} \end{matrix} \right)
\mapsto \left( \begin{matrix} |\alpha|^2 - |\beta|^2  & -2\Re \alpha\beta & -2\Im \alpha \beta \\
2 \Re \alpha\overline{\beta} & \Re (\alpha^2 - \overline{\beta}^2) & \Im (\alpha^2 + \overline{\beta}^2) \\
-2 \Im \alpha\overline{\beta} & - \Im (\alpha^2 - \overline{\beta}^2) & \Re (\alpha^2 + \overline{\beta}^2)
\end{matrix}\right)  \]

A corollary
is the Lie algebra isomorphism $\su_2 \cong \mathsf{so}_3(\R)$.
It is arguably more elegant to bring in the Lie algebra earlier
and restrict the adjoint action
instead to $\langle iZ, iX, iY\rangle$. We then get $\SU_2$ acting by conjugacy
on its Lie algebra $\su_2$ of anti-Hermitian matrices~$M$ such that \smash{$M = -\overline{M}^t$};
an explicit isomorphism $\SU_2(\R) \cong 
(\R^3, \wedge)$ is then defined on this basis by $-\mfrac{iX}{2} \mapsto \vecn{x}$, $-\mfrac{iY}{2}  \mapsto \vecn{y}$,
$-\mfrac{iZ}{2}  \mapsto \vecn{z}$. In turn an isomorphism $(\R^3, \wedge) \cong \so_3(\R)$
is given by mapping $\vecn{n}$ to the generator of the infinitesimal rotation 
with axis $\vecn{n}$. In our chosen basis, 
\[ -\mfrac{iZ}{2} %= \mfrac{1}{2} \left( \begin{matrix} -i & 0 \\ 0 & i \end{matrix} \right)
\mapsto \left( \begin{matrix} 0 & \cdot & \cdot \\ 
\cdot & 0 & \;\llap{$-$}1 \\ \cdot & 1 & 0 \end{matrix} \right), \
-\mfrac{iX}{2} %= \mfrac{1}{2} \left( \begin{matrix} -i & 0 \\ 0 & i \end{matrix} \right)
\mapsto \left( \begin{matrix} 0\!\! & \cdot & 1 \\ 
\cdot\!\! & 0 & \cdot \\  -1\!\!  & \cdot & 0 \end{matrix} \right) , \
-\mfrac{iY}{2} %= \mfrac{1}{2} \left( \begin{matrix} -i & 0 \\ 0 & i \end{matrix} \right)
\mapsto \left( \begin{matrix} 0 & \;\llap{$-$}1 & \cdot \\ 
\ 1 & 0 & \cdot \\ \cdot & \cdot & 0 \end{matrix} \right) 
\]
See \S\ref{subsec:whyUnitary} for why anti-Hermitian matrices are natural in this context.
}

\end{exercise}

The previous exercise is, as far as the author has been able to understand,
what physicists mean when they say that the symmetry group
of a qubit is $\SU_2$ or, rephrased,
that a qubit `is' a vector in the standard representation of~$\SU_2$.
At the next level up there is the standard representation $\mathcal{K}$ of~$\mathrm{SU}_3$,
in  which the canonical basis vectors label the up, down and strange flavours of quarks and general
elements represent superpositions of these flavour states. 
The $27$-dimensional tensor product $\mathcal{K} \hskip1pt\otimes\hskip1pt \mathcal{K} \hskip1pt\otimes\hskip1pt \mathcal{K}$
decomposes into irreducible subrepresentations that partition baryons made from the up, down and strange quarks.
For instance the proton and neutron `live' in the same irreducible $8$-dimensional
representation: this is the famous Gell-Mann eightfold way.
Be warned that if you search on the web, you are very likely
to find this, rather than the simpler $\SU_2$ symmetry group relevant to a spinor qubit.

\begin{exercise}\label{ex:fullTurn}
Suppose that the measurement apparatus is rotated slowly by a full turn, about the $\vecn{y}$-axis as usual, measuring
as usual after each small rotation. What is the new quantum state? Interpret
this as paths in $\SU_2$ and $\SO_3(\R)$.\footnote{\textbf{Answer:}
the final state is not the starting state $\ket{0}$ but instead $-\ket{0}$, which differs
by an unobservable global phase. 
Particles with this property are called \emph{spinors}.
Thus in $\SO_3(\R)$ we made a full circuit (visiting all rotations about the $\vecn{y}$-axis 
by angles
between $0$ and $2\pi$) but in $\SU_2$ we travelled
only from $I$ to the antipodal point $-I$. This is consistent with the double cover
because $U$ and $-U$ have the same image in $\SO_3(\R)$; in fact the kernel
of the double cover homomorphism is the subgroup $\{I, -I\}$. While the global
phase in $-\ket{0}$ is unobservable, if the rotated qubit is entangled in a Bell pair (see \S\ref{sec:twoQubits}),
the phase difference between rotated and unrotated qubits can be measured experimentally.%
%More generally, a trip by an angle $\theta$ in $\SU_2$ corresponds to a trip
%by an angle $2\theta$ in $\SO_3(\mathbb{R})$; this explains why in the Bloch sphere
%representation of normalized states (up to phase), $\ket{0}$ and $\ket{1}$ are antipodal
%rather than orthogonal. For more on the Bloch sphere see \cite[page~15]{NielsenChuang}.
} 
\end{exercise}

\subsection{Summary: the Bloch sphere${}^\star$}\label{subsec:Bloch}
Generalizing Exercise~\ref{ex:fullTurn}, it follows from the explicit double cover map in footnote~\ref{footnote:adjointRep}
that the image of the one-parameter
subgroup 
\[ \theta \mapsto \left( \begin{matrix} \cos \theta &  -\sin \theta \\ \sin \theta & \cos \theta
\end{matrix}\right) \in \SU_2 \]
under the double cover map is the one-parameter subgroup
\[ \theta \mapsto \left( \begin{matrix} \cos 2\theta & \sin2\theta & 0 \\ -\sin2\theta & \cos 2\theta & 0 \\
0 & 0 & 1 \end{matrix} \right) \in \SO_3(\mathbb{R}). \]
For example, as seen in Exercise~\ref{ex:fullTurn},
a trip along the path defined by $0 \le \theta \le \pi$ in $\SU_2(\R)$, which overall flips the phase
of the qubit, becomes a closed loop in $\SO_3(\R)$; this corresponds to the unobservability
of global phase.
Going the other way, by varying the axis of rotation in $\SO_3(\R)$ (the spin axis that we can measure)
we can trace out any path we like in $\SU_2$, or equivalently, move the starting qubit $\ket{0}$ to
an arbitrary $\alpha\! \ket{0} + \beta\! \ket{1}$. 
Since elements of $\SO_3(\R)$ are rotations, acting on the $2$-sphere, this implies that we should
be able to visualize states (up to phase) as points on the $2$-sphere. 
This is the \emph{Bloch sphere} model, shown in Figure~\ref{fig:Bloch}.

\begin{figure}[h!]
\begin{center}
\begin{tikzpicture}[scale=3]

    \def\R{1}       
    \def\tilt{0.4}

    \draw (0,0) circle (\R);

    \draw[dashed, gray] (\R,0) arc (0:180:\R cm and \tilt cm);
    \draw (\R,0) arc (0:-180:\R cm and \tilt cm);

    \draw[->] (0,0) -- (0, \R + 0.3) node[above] {$z$};
    \draw (0,0) -- (0, -\R - 0.3);

    \draw[->] (0,0) -- (\R + 0.3, 0) node[right] {$y$};
    \draw (0,0) -- (-\R - 0.3, 0);

    \draw[->, thick] (0,0) -- (-135:\R + 0.4) node[below] {$x$};
    \draw[dashed] (0,0) -- (45:\R + 0.4);

    \fill (0,\R) circle (0.7pt) node[above left] {$\ket{0}$};
    
    \fill (0,-\R) circle (0.7pt) node[below left] {$\ket{1}$};
    
    \fill (\R,0) circle (0.7pt) node[above right] {$\mfrac{1}{\sqrt{2}}\bigl(\ket{0} + i \ket{1}\bigr)$};
    
    \fill (-\R,0) circle (0.7pt) node[above left] {$\mfrac{1}{\sqrt{2}}\bigl(\ket{0} - i \ket{1}\bigr)$};

    \fill (0,0) circle (0.5pt);
    
    \node[below] at (-135:\R) {$\ket{+}$};
    \node[above] at (45:\R) {$\ket{-}$};

\end{tikzpicture}
\end{center}
\caption{The Bloch sphere showing the $Z$-basis $\ket{0}$, $\ket{1}$, the $X$-basis $\ket{+}$, $\ket{-}$ and 
the normalized eigenvectors of $Y = iXZ$.\label{fig:Bloch}}
\end{figure}
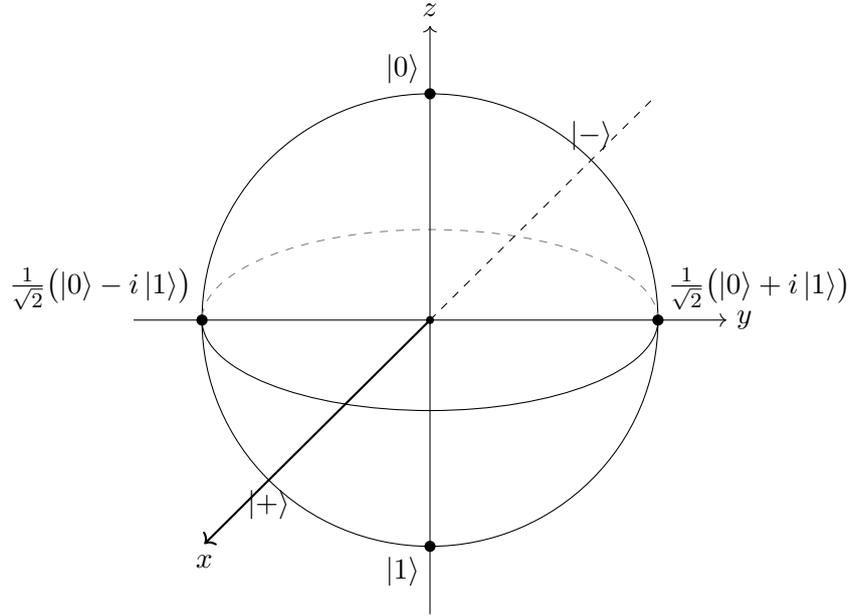

\begin{exercise}
Check that a normalized qubit, up to phase, has two free parameters,
and so the $2$-dimensional Bloch sphere has the right dimension.\footnote{\textbf{Solution:} 
a general normalized qubit is $\alpha\! \ket{0} + \beta\!\ket{1}$ where $|\alpha|^2 + |\beta|^2 = 1$.
Working up to phase we can multiply through by $e^{-i \theta}$ where $\theta$ is the argument
of $\alpha$, to reduce to the case where $\alpha$ is real. So the state is
$c\ket{0} + (d + i e)\ket{1}$ where
 $c^2 + d^2 + e^2 = 1$, matching points on the $2$-sphere.}
 \end{exercise}
 
By the angle doubling remark above, \emph{orthogonal} quantum states, such as the $Z$-eigenvectors 
$\ket{0}$ and $\ket{1}$
correspond to \emph{antipodal} points on the Bloch sphere.

\begin{exercise}\label{ex:Y}
Check that the Pauli $Y$ matrix \marginpar{\qquad\quad$\left(\begin{matrix} 0 & -i \\ i & 0 \end{matrix} \right)$}
defined by $Y=iXZ$, as shown in 
the margin, has eigenvectors as shown on the $y$-axis of the Bloch sphere.\footnote{\textbf{Solution:}
after multiplying through by the normalization factor $\mfrac{1}{\sqrt{2}}$ we have
\[ Y \bigl(\ket{0} \pm i \ket{1}\bigr) = \left( \begin{matrix}  0 & -i \\ i & 0 \end{matrix} \right)
\left( \begin{matrix} 1 \\ \pm i \end{matrix} \right) = 
\left( \begin{matrix} \pm 1 \\  i \end{matrix} \right) = \pm \left( \begin{matrix} 1 \\ \pm i \end{matrix} \right)
 = \pm \bigl(\ket{0} \pm i \ket{1}\bigr). \]
%now normalize as we do for $\ket{+} = \mfrac{1}{\sqrt{2}} \bigl( \ket{0} + \ket{1} \bigr)$.
\vspace*{-12pt}
}
\end{exercise}

For our explicit double cover map,
the map from normalized qubits to points on the Bloch sphere is
\begin{align*}
\alpha\! \ket{0} &{}+ \beta\!\ket{1} = \left( \begin{matrix}
\alpha & -\overline{\beta} \\ \beta & \overline{\alpha} \end{matrix} \right)
\ket{0} 
\\ &\!\!\mapsto \left( \begin{matrix} |\alpha|^2 - |\beta|^2  & -2\Re \alpha\beta & -2\Im \alpha \beta \\
2 \Re \alpha\overline{\beta} & \Re (\alpha^2 - \overline{\beta}^2) & \Im (\alpha^2 + \overline{\beta}^2) \\
-2 \Im \alpha\overline{\beta} & - \Im (\alpha^2 - \overline{\beta}^2) & \Re (\alpha^2 + \overline{\beta}^2)
\end{matrix}\right) \! \left(\begin{matrix} 1 \\ 0 \\ 0
\end{matrix} \right) \!=\! \left( \begin{matrix} |\alpha|^2 - |\beta|^2 \\ 2\Re \alpha\overline{\beta} \\ 
-2\Im \alpha\overline{\beta} \end{matrix} \right)\! .
\end{align*}
Note that the axes are ordered $(z,x,y)$.
This is a reformulation of the remark in footnote~\ref{footnote:adjointRep} that in the measurement representation
$\vecn{z} \mapsto a_Z \vecn{z} + a_X \vecn{x} + a_Y \vecn{y}$,
where $a_Z = |\alpha|^2 -|\beta|^2$, $a_X = 2\Re \alpha\overline{\beta}$, $a_Y = -2\Im \alpha\overline{\beta}$.
 In the same footnote, we saw that
since $Z$
has eigenbasis $\ket{0}$, $\ket{1}$, the conjugate
\begin{align*} UZU^{-1} &=
%\left( \begin{matrix} |\alpha^2| - |\beta|^2 & 2\alpha \overline{\beta} \\
%2\overline{\alpha} \beta & - |\alpha|^2 + |\beta|^2  \end{matrix} \right)
 \bigl(|\alpha|^2 - |\beta|^2 \bigr)Z  +  2 \Re \alpha\overline{\beta} X - 2\Im \alpha\overline{\beta}  Y \\
 &= a_Z Z + a_X X + a_Y Y
\end{align*}
has $\alpha \!\ket{0} + \beta \! \ket{1}$ as a $1$-eigenvector. (And, from the second 
column of $U$, $-\overline{\beta}\ket{0} + \overline{\alpha}\ket{1}$
spans its orthogonal complement in $\HH$ and is a $-1$-eigenvector.) Thus qubits transform by $\SU_2$
whereas physical measurements --- thought of either as Hermitian matrices as immediately
above, or directions of our measuring apparatus in~$\R^3$ --- transform by the image $\SO_3(\R)$
of $\SU_2$ under the double cover map; this is the adjoint representation of $\SU_2$.
In particular by taking \smash{$\alpha = \mfrac{1}{\sqrt{2}}$, $\beta = \mfrac{i}{\sqrt{2}}$} 
and noting that \smash{$a_Z = 0$, $a_X = 0$ and $a_Y = \mfrac{1}{2}$}, 
we get an alternative solution to Exercise~\ref{ex:Y}.
For more on the Bloch sphere see \cite[page~15]{NielsenChuang}.

\section{Two qubits and the copy rules}\label{sec:twoQubits}

\subsection{Two qubits}\label{subsec:twoQubits}
The correct way to model two qubits is by the tensor product $\HH \otimes \HH$
where, as always in these notes, the tensor product is over~$\C$.
Experience shows that it works very well to take as the \emph{definition} of $\HH \otimes \HH$ that it is the
vector space spanned by the symbols $\ket{0} \otimes \ket{0}$, $\ket{0} \otimes \ket{1}$,
$\ket{1} \otimes \ket{0}$, $\ket{1} \otimes \ket{1}$, which we quickly rewrite as 
either $\ket{0}\!\ket{0}$, $\ket{0}\!\ket{1}$, $\ket{1}\!\ket{0}$, $\ket{1}\!\ket{1}$,
or, particularly when more than two qubits (see \S\ref{subsec:Zbasis}) are involved, as
$\ket{00}$, $\ket{01}$, $\ket{10}$, $\ket{11}$. This is the $Z$-\emph{basis} of $\HH \otimes \HH$
of eigenvectors for $Z \otimes Z$.\footnote{\textbf{Maths:} of course if you prefer
a more high-brow definition of the tensor product, then you are very welcome to it. It might
seem that omitting the tensor product sign in $\ket{0}\!\ket{1}$ would create an ambiguity
with symmetric powers, but, unlike most things, this has never confused the author.}

\subsubsection*{CNOT gates}
By far the most important $2$-qubit gate is the CNOT gate. The inputs to the CNOT
gate are a \emph{control} qubit and a \emph{target qubit}. Its matrix in the
basis $\ket{0}\!\ket{0}$, $\ket{0}\!\ket{1}$, $\ket{1}\!\ket{0}$, $\ket{1}\!\ket{1}$, supposing 
that the first qubit is the control qubit and the second qubit is the target qubit, is 
\begin{align*} & \qquad\! \scriptstyle \ket{00} \ \ket{01}\ \ket{10} \ \ket{11} \\[-6pt]
\CNOT &= \left(\setlength{\arraycolsep}{6pt} \begin{matrix}1 & 0 & 0 & 0 \\ 0 & 1 & 0 & 0 \\ 0 & 0 & 0 & 1 \\ 0 & 0 & 1 & 0  \end{matrix}\right)
\qquad \raisebox{15pt}{\Qcircuit @C=12pt @R=18pt {
 & \qw  & \ctrl{1} & \qw & \qw  \\ & \qw & \targ{} & \qw& \qw
 }}
 \end{align*}
The circuit diagram for CNOT is shown right above, with the usual convention that the first qubit is on the top wire.
Given any $\ket{\psi} = \alpha \!\ket{0} + \beta \!\ket{1} \in \HH$, we have
\begin{align*}
\CNOT \ket{0} \!\ket{\psi} &= \CNOT \bigl( \alpha\! \ket{0} \!\ket{0} + \beta\! \ket{0} \!\ket{1} \bigr)
= \alpha\! \ket{0} \!\ket{0} + \beta\! \ket{0} \!\ket{1} = \ket{0}\!\ket{\psi}, \\[-3pt]
\CNOT \ket{1}\! \ket{\psi} &= \CNOT \bigl( \alpha\! \ket{1} \!\ket{0} + \beta\! \ket{1} \!\ket{1} \bigr)
= \alpha\! \ket{1}\! \ket{1} + \beta\! \ket{1} \!\ket{0} = \ket{1}\! X\ket{\psi}. \end{align*}
Thus one often says that CNOT `flips the target qubit if the control qubit is set'.\footnote{\textbf{Misleading:}
in practice the control qubit is very often not $\ket{0}$ or $\ket{1}$ but
instead the superposition $\ket{+} = \mfrac{1}{\sqrt{2}} \bigl( \ket{0} + \ket{1} \bigr)$, and so this `classical' account of what CNOT does makes no sense. Still it seems to work for everyone.} In the setup of Definition~\ref{defn:controlledGate}, CNOT is a controlled NOT gate.

\begin{exercise}\label{ex:makeBell}
Show that \smash{$\CNOT \ket{+}\ket{0} = \mfrac{1}{\sqrt{2}} \bigl( \ket{00} + \ket{11} \bigr)$} as shown
diagramatically below.\vspace*{2pt}
\[ \Qcircuit @C=12pt @R=18pt {
    \llap{$\ket{+}$}\; & \qw  & \ctrl{1} & \qw & \qw  \\ 
    \llap{$\ket{0}$}\; & \qw & \targ{} & \qw& \qw 
    } \quad\raisebox{-12pt}{$\mfrac{1}{\sqrt{2}} \bigl( \ket{00} + \ket{11} \bigr)$} \]

\smallskip
\noindent The output is known as the \emph{Bell state}. Is there a meaningful way to label the 
two output wires separately, as we did for the input wires?\footnote{\textbf{Solution:} by linearity,\label{footnote:CNOTlinear}
\[ \CNOT \ket{+}\ket{0} \!=\! \CNOT \mfrac{1}{\sqrt{2}}\bigl( \ket{0} + \ket{1}\bigr) \ket{0} \!=\!
 \mfrac{1}{\sqrt{2}} \CNOT \ket{00} + \mfrac{1}{\sqrt{2}} \CNOT \ket{10} \!=\!
\mfrac{1}{\sqrt{2}}  \ket{00} + \mfrac{1}{\sqrt{2}} \ket{11}\] 
as required. No: since the Bell state
does not factor as a tensor product $\ket{\phi} \otimes \ket{\psi}$ the wires
cannot be labelled separately. In fact the Bell state is in a precise sense,
maximally entangled. See \S\ref{subsec:superdense} for an application of this entanglement.}
\end{exercise}

\subsection{The copy rules}\label{subsec:copyRules}
Here are the diagrammatic copy rules for conjugating faults past CNOT gates. The equivalent
algebra is below.
\begin{align*} 
  &\hspace*{0.1in} \Qcircuit @C=10pt @R=18pt {
   &  \qw  & \rX & & \ctrl{1}  & \qw \\
   &  \qw  & \qw & \qw & \targ{}  & \qw 
 }  \qquad\raisebox{-12pt}{\hspace*{-1.4em}$=$}\  \Qcircuit@C=10pt @R=18pt { &  \qw   & \ctrl{1}  & \qw &  \rX & & \qw \\
   &  \qw  & \targ{}  & \qw & \rX & & \qw}
  \hspace*{0.5in}
   \Qcircuit @C=10pt @R=18pt {
   &  \qw  & \qw & \qw & \ctrl{1}  & \qw \\
   &  \qw  & \rX & &  \targ{}  & \qw 
 } \qquad\raisebox{-12pt}{\hspace*{-1.4em}$=$}\    \Qcircuit @C=10pt @R=18pt {
   &  \qw & \ctrl{1}  & \qw & \qw & \qw & \qw \\
   &  \qw & \targ{}  & \qw & \rX & & \qw
 } 
 \\[3pt]
   &  \CNOT (X \otimes I) = (X \otimes X) \CNOT \qquad    \CNOT (I \otimes X) = (I \otimes X) \CNOT \\[6pt]
  &\hspace*{0.1in} 
 \Qcircuit @C=10pt @R=18pt {
   &  \qw  & \rZ &  & \ctrl{1}  & \qw  \\
   &  \qw  & \qw  & \qw &  \targ{}  & \qw 
 }\qquad\raisebox{-12pt}{\hspace*{-1.4em}$=$}\  \Qcircuit@C=10pt @R=18pt {
   &  \qw  & \ctrl{1}  & \qw & \rZ & & \qw \\
   &  \qw  &  \targ{}  & \qw & \qw & \qw & \qw
 }     \hspace*{0.5in}
  \Qcircuit @C=10pt @R=18pt {
   &  \qw  & \qw & \qw & \ctrl{1}  & \qw \\
   &  \qw  & \rZ &  & \targ{}  & \qw
 } \qquad\raisebox{-12pt}{\hspace*{-1.4em}$=$}\    \Qcircuit @C=10pt @R=18pt {
   &  \qw & \ctrl{1}  & \qw & \rZ & & \qw \\
   &  \qw & \targ{}  & \qw & \rZ & & \qw
 }
   \\[3pt]
   &  \,\CNOT (Z \otimes I) = (Z \otimes I) \CNOT \qquad\ \   \CNOT (I \otimes Z) = (Z \otimes Z) \CNOT  
\end{align*}

\begin{exercise}
Prove the copy rules.\footnote{\textbf{Not a solution:} sorry, this really is something everyone
should do once in their life. If you find the algebra isn't working for you, please check that you 
are composing maps from right-to-left; note this is the opposite direction to circuit diagrams, but consistent
with matrices acting on column vectors and usual mathematical practice. (Although
with a double dose of potential confusion, this might not matter because CNOT is self-inverse.) Another 
way to go wrong is to be inconsistent in the order of qubits.  For instance the matrices for $X \otimes I$ and $I \otimes X$ are, with our usual
order $\ket{00}$, $\ket{01}$, $\ket{10}$, $\ket{11}$ for the basis,
\[ \left( \begin{matrix} 0 & \cdot & 1 & \cdot \\ \cdot & 0 & \cdot & 1 \\ 1 & \cdot & 0 & \cdot \\
\cdot & 1 & \cdot & 0 \end{matrix} \right) \qquad
\left( \begin{matrix} 0 & 1 & \cdot & \cdot  \\ 1 & 0 & \cdot & \cdot  \\ \cdot & \cdot & 0 & 1 \\
\cdot & \cdot  & 1 & 0 \end{matrix} \right).
\]
As a visual guide, $\cdot$ denotes a $0$ from the tensor product factorization, which
puts a $0$ in the positions corresponding to the zeros of $X$; these are marked $\cdot$ in the margin.}
\end{exercise}

In the context of quantum error correction, we often imagine the incoming $X$- or $Z$- as an $X$-fault or $Z$-fault
thrown by a quantum glitch somewhere in the circuit. The copy rules then become rules for fault propagation.

\begin{remark}
From the classical point of view that CNOT is an operation
performed on the target qubit, it may seem very unintuitive
that a $Z$-fault on the target qubit should copy up to
the control qubit: well, that's entanglement for you.
This becomes a recurring theme in quantum error correction:
faults on ancilla states copy up to the target state.
See Exercise~\ref{ex:copyUp} for an example of this.
\end{remark}

Any circuit involving only CNOT and Hadamard gates (and a few other gates we haven't  defined)
\marginpar{\raisebox{-2.75in}{\qquad$X = \left(\begin{matrix}
\cdot & 1 \\ 1 &\cdot \end{matrix}\right)$}}
is characterized by its conjugation action on $X$- and $Z$-faults.\footnote{\textbf{Sketch proof:} all these gates lie in the Clifford
group and so are determined, up to phase, by their conjugacy action on tensor products of the Pauli $X$ and $Z$ operators.
But $H$ and $\CNOT$ do not introduce any phases, so this `up to' is irrelevant. For example, CNOT is the unique phaseless $2$ qubit gate satisfying the copy rules above, which restated as conjugations are
$\CNOT (X\otimes I) \CNOT^{-1} = X \otimes X$, and so on.$\;\Box$
The characterization extends to circuits using
the single qubit $X$, $Y$ and $Z$ gates if one allows for a global phase of $\pm 1$, and to general
Clifford circuits, including the $S$-gate,
allowing for $\pm 1$ and $\pm i$.}
That is, \emph{how the circuit propagates incoming $X$- and $Z$-faults determines it completely}.
This makes the copy rules enormously powerful.

\begin{exercise}
The SWAP gate is
defined on $\HH \otimes \HH$ in a basis independent way by
$\SWAP \ket{\phi}\ket{\psi} = \ket{\psi}\ket{\phi}$. Its $Z$-basis matrix
is shown below, with a circuit implementing SWAP to the right.
\[ \left(\begin{matrix} 1 & 0 & 0 & 0 \\ 0 & 0 & 1 & 0 \\ 0 & 1 & 0 & 0 \\
0 & 0 & 0 & 1 \end{matrix} \right) \qquad
\raisebox{15pt}{\Qcircuit{ &  & \ctrl{1} & \targ{} & \ctrl{1} & \qw \\ &  & \targ{} & \ctrl{-1} & \targ{} & \qw  }}
\]
% This really is the right basis independent form:
% (a \ket{0} + b \ket{1}) \otimes (c \ket{0} + d \ket{1}) = ac \ket{00} + ad \ket{01} + bc \ket{10} + \bd \ket{11}
% -> ac\ket{00} + ad \ket{10} + bc \ket{01} + bd\ket{11} = (c \ket{0} + d\ket{1})(a \ket{0} + b\ket{1}) since we
% can see that ac = ca is the coefficient of \ket{00}, ad = da is the coefficient of \ket{10}, etc.
Prove this by pushing faults through the three CNOT gates using the copy rules.
Equivalently, using indices to denote control and targets in CNOTs, prove
that $\SWAP = \CNOT_{12}\CNOT_{21}\CNOT_{12}$.\footnote{\textbf{Solution}: 
the diagrams below shows that $\SWAP (X \otimes I) = (I \otimes X)\SWAP$
and $\SWAP (Z \otimes I) = (I \otimes Z)\SWAP$.
\[\hspace*{-0.0in}\scalebox{0.9}{$ \Qcircuit @C=10pt @R=18pt {
   &  \qw  & X & & \ctrl{1}  & \qw & \rX & &  \targ{} & \qw & I & & \ctrl{1} & \qw & \qw & \qw & \qw \\
   &  \qw  & \qw & \qw & \targ{}  & \qw & \rX & &  \ctrl{-1} & \qw & \rX & & \targ{} & \qw & \rX & & \qw
 }  \hspace*{0.4in}
\Qcircuit @C=10pt @R=18pt {
   &  \qw  & \rZ & & \ctrl{1}  & \qw & \rZ & &  \targ{} & \qw & \rZ & & \ctrl{1} & \qw & I &  & \qw \\
   &  \qw  & \qw & \qw & \targ{}  & \qw & \qw & \qw &  \ctrl{-1} & \qw & \rZ & & \targ{} & \qw & \rZ & & \qw
 }  $}    
 \]
 
\smallskip 
\noindent Here $I$ is used to denote the place where two $X$- or two $Z$-faults cancel. 
Very similarly, or algebraically by composing on the left and right with SWAP, we have
 $(X \otimes I)\SWAP  = \SWAP (I \otimes X)$
 and  $(Z \otimes I)\SWAP  = \SWAP (I \otimes Z)$.
 By the Clifford characterization, since the putative SWAP gate does the correct thing
 on $X$- and $Z$-faults, and is phaseless, it is SWAP.
 It is interesting to compare
 this circuit with the way to swap two variables in a classical computer without 
 using extra scratch space: in C, using the XOR operator \texttt{\caret}, this is
 \texttt{x  = x\caret y; y = x\caret y; x = x\caret y;}}
  %verb not working in footnote. Indeed.
\end{exercise}

\begin{exercise}\label{ex:makeBellPushing}
Use fault pushing to give an alternative solution to Exercise~\ref{ex:makeBell}.
[\emph{Hint:} the stabiliser group of $\ket{+}\ket{0}$ is generated by $X \otimes I$
and $I \otimes Z$. Push these faults through to get the stabiliser group of the output state.]\footnote{\textbf{Solution:} the
`non-pass-through'
copy rules are precisely the statements that $\CNOT (X \otimes I) =  (X\otimes X)\CNOT$ and
$\CNOT (I \otimes Z) = (Z\otimes Z) \CNOT$. Hence, writing $\ket{\Phi}$ for the output of the circuit
in Exercise~\ref{ex:makeBell},
\begin{align*} \ket{\Phi} = \CNOT \ket{+}\ket{0} &= \CNOT (X\otimes I) \ket{+}\ket{0} = (X \otimes X) \CNOT \ket{+} \ket{0}
= (X \otimes X) \ket{\Phi} \\
\ket{\Phi} =  \CNOT \ket{+}\ket{0} &= \CNOT (I\otimes Z) \ket{+}\ket{0} = (I \otimes Z) \CNOT \ket{+} \ket{0}
= (Z \otimes Z) \ket{\Phi} \end{align*}
showing that $\ket{\Phi}$ has stabiliser group containing $X \otimes X$ and $Z \otimes Z$.
By reversibility, this is the full stabiliser group. The unique state with this stabiliser
group is the Bell state, so the state made by the circuit is $\ket{\Phi}$. In this case the fault pushing solution
is more fuss than a direct calculation, but in the context of the larger ancilla states needed
in quantum error correction, the method of stabiliser subgroups becomes very powerful.
}
\end{exercise}

% Note we don't need measurement on a subset until QEC, even for the Bell measurements can do on both

\subsection{Superdense encoding: entanglement as a resource${}^\star$}\label{subsec:superdense}
Dual to the Bell state preparation circuit in Exercise~\ref{ex:makeBell} 
there is the measurement circuit below.
\[ \Qcircuit{ & \ctrl{1} &  \gate{H} & \meter \\ & \targ{}  & \qw & \meter }
\]
Here $H$ denotes the Hadamard gate and, as usual, meters denote $Z$-basis measurement on each qubit.
For this subsection, the simple single qubit definition of measurement 
(Definition~\ref{defn:measureZ}) can be applied to each qubit in turn.\footnote{\textbf{More elegant:}
the author would rather think of the top line as measurement in the $X$-basis: compare 
the two circuits at the end of \S\ref{subsec:QDFT}. This makes the duality more obvious: preparation of 
$\ket{+}$ dualizes to measurement in the $X$-basis; preparation of $\ket{0}$ dualizes to measurement in the $Z$-basis.
In the ZX-calculus \cite{WetteringZX}, the symmetry is complete: the diagrams for Bell state preparation
and Bell measurement are shown far left and right below, and their fused versions in the middle. 
By a further simplification these become the `cup' and `cap' operators from string-diagram calculus.
None of this matters, but you might enjoy contemplating it, and possibly even being paid to do so.

\medskip
\begin{center}
  \begin{ZX}
  	\zxZ[a=A1]{\phantom{0}} \arrow[from=A1, to=A2] & [9pt] 
						& \zxZ[a=A2]{\phantom{0}} \arrow[from=A2, to=B2] \arrow[from=A2, to=A3] & [9pt]
							 & \node[a=A3] {}; 
							 \\[12pt]
    \zxX[a=B1]{\phantom{0}} \arrow[from=B1, to=B2] & [9pt] & \zxX[a=B2]{\phantom{0}} &  [9pt]\arrow[from=B2, to=B3] 
    						 &  \node[a=B3] {}; 
  \end{ZX}\hspace*{0.2in}
  \begin{ZX} 
  \zxZ[a=A1]{\phantom{0}} \arrow[from=A1, to=B1] \arrow[from=A1, to=A2] & [9pt] \node[a=A2] {}; \\[12pt]
    \zxX[a=B1]{\phantom{0}} \arrow[from=B1, to=B2] & [9pt] \node[a=B2] {};
\end{ZX}\hspace*{0.5in}
 \begin{ZX} 
  \node[a=A1] {}; \arrow[from=A1, to=A2] & [9pt] & \zxZ[a=A2]{\phantom{0}} \arrow[from=A2, to=B2] \\[12pt]
  \node[a=B1] {}; \arrow[from=B1, to=B2] & [9pt] & \zxX[a=B2]{\phantom{0}}
\end{ZX}\hspace*{0.2in}
  \begin{ZX}
  	\node[a=A1] {}; \arrow[from=A1, to=A2] & [9pt] 
						& \zxZ[a=A2]{\phantom{0}} \arrow[from=A2, to=B2] \arrow[from=A2, to=A3] & [9pt]
							 & \zxZ[a=A3]{\phantom{0}} 
							 \\[12pt]
    \node[a=B1] {}; \arrow[from=B1, to=B2] & [9pt] & \zxX[a=B2]{\phantom{0}} &  [9pt]\arrow[from=B2, to=B3] 
    						 & \zxX[a=B3]{\phantom{0}}
  \end{ZX}
\end{center}
}
Let \smash{$\ket{\Phi} = \mfrac{1}{\sqrt{2}} \bigl( \ket{00} + \ket{11} \bigr)$} denote the Bell state.
Since $(X \otimes I)\! \ket{\Phi} = (I \otimes X)\!\ket{\Phi}$ 
we can write $X\! \ket{\Phi}$ without ambiguity to mean `there is an $X$-error on the Bell state'.
The table below showing the measurement results for both top and bottom measurements
(using Definition~\ref{defn:measureZ}
to define measurement on a single qubit) on the four Bell basis states.

\vspace*{-3pt}
\begin{center}
\begin{tabular}{cccccc} \\ \toprule
State & In $Z$-basis & After CNOT & After $H$ & Top & Bottom \\ \midrule
$\ket{\Phi}$ & $\mfrac{1}{\sqrt{2}} \bigl( \ket{00} + \ket{11})$ & $\ket{+}\!\ket{0}$ & $\ket{0}\!\ket{0}$ &  $0$ & $0$ \\[4pt]
$X\!\ket{\Phi}$ & $\mfrac{1}{\sqrt{2}} \bigl( \ket{01} + \ket{10})$ & $\ket{+}\!\ket{1}$& $\ket{0}\!\ket{1}$ &  $0$ & $1$ \\[4pt]
$Z\!\ket{\Phi}$ & $\mfrac{1}{\sqrt{2}} \bigl( \ket{00} - \ket{11})$ & $\ket{-}\!\ket{0}$ & $\ket{1}\!\ket{0}$&  $1$ & $0$ \\[4pt]
$XZ\!\ket{\Phi}$ & $\mfrac{1}{\sqrt{2}} \bigl( \ket{01} - \ket{10})$ & $\ket{-}\!\ket{1}$ & $\ket{1}\!\ket{1}$&  $1$ & $1$ 
\\ \bottomrule 
\end{tabular}
\end{center}

\smallskip
Again let us observe the traditional pause to remember that \emph{measurement changes the state}.
For example, since $(H \otimes I) \CNOT \ket{10} = H \ket{11} = \ket{-}\ket{1}$ \smash{$= \mfrac{1}{\sqrt{2}}
\bigl( \ket{01} - \ket{11} \bigr)$}, measuring
$\ket{11}$ by the Bell basis gadget reports $(0,1)$ and $(1,1)$ with equal probability;
the final states are $\ket{01}$ and $\ket{11}$ respectively.\footnote{\textbf{Pitfall:}\label{footnote:pitfall}
since the final states are not in the Bell basis, it would be dangerously loose to describe
our gadget as `measuring in the Bell basis'; instead this is done by
projecting directly to this basis. In our example,
\[ \ket{10} = \mfrac{1}{2} \bigl( \ket{01} + \ket{10}\bigr) - \mfrac{1}{2} \bigl( \ket{01}  -\ket{10} \bigr) =
\mfrac{1}{\sqrt{2}} X \ket{\Phi} + \mfrac{1}{\sqrt{2}} XZ \ket{\Phi} \]
and so \smash{$\braket{\beta X}{10} = \frac{1}{\sqrt{2}}$}, 
\smash{$\braket{\beta XZ}{10} = \frac{1}{\sqrt{2}}$}, and $\braket{\beta}{10} = \braket{\beta Z}{10} = 0$.
Therefore projection of $\ket{10}$ to the Bell basis gives
 $\ket{X \beta}$ and $\ket{XZ \beta}$ with equal probability, corresponding to the measurement
results $(1,0)$ and $(1,1)$ reported by our gadget, but
now \emph{these} Bell basis states are the two possible results of measurement.
One can extend the gadget so that it truly `measures in the Bell basis' by having it apply
the reverse of the measurement circuit, as shown below,
\[ \Qcircuit{ & \gate{H} & \ctrl{1} & \qw  \\ & \qw & \targ{} & \qw  }\]
 to each of the four
possible states $\ket{0}\!\ket{0}$, $\ket{0}\!\ket{1}$, $\ket{1}\!\ket{0}$, $\ket{1}\!\ket{1}$ that 
are the outputs of the original gadget.}

\smallskip
\begin{exercise}
Suppose that Alice and Bob share a Bell pair. 
Show that Alice can transmit two classical bits of information to Bob by applying
suitable Pauli operators to the qubit in her possession and then sending this single qubit to Bob.\footnote{\textbf{Solution:} Alice applies either $I$, $X$, $Z$ or $XZ$. Bob receives the qubit
and then measures both his qubits using the gadget above. By the table, his pair of measurements
distinguishes the four cases.} 
\end{exercise}

Superdense encoding is perhaps only surprising if you think that a qubit is an ordinary classical
bit in disguise and overlook that entanglement is a vital resource in quantum communication.
A much more striking application of Bell state preparation and measurement is
quantum teleportation, but this is beyond the scope of this section 
because it needs three qubits.\footnote{\textbf{But anyway:} 
to demonstrate quantum teleportation algebraically,
use the identity
$\bigl(\alpha\!\ket{0}+\beta\!\ket{1}\bigr) \mfrac{1}{\sqrt{2}} \bigl( \ket{00} + \ket{11} \bigr)
= \frac{1}{2}\sum_{P \in \{I,X,Z,XZ\}} %note 1/2 not 1/4
\mfrac{1}{\sqrt{2}} P\bigl( \ket{00} + \ket{11} \bigr) P \bigl(\alpha\!\ket{0}+\beta\!\ket{1}\bigr)$
to show that when Alice measures the first two qubits 
of
\[ \bigl(\alpha \!\ket{0} + \beta\!\ket{1}\bigr)\ket{\Phi} = \mfrac{1}{\sqrt{2}} \bigl(\alpha\! \ket{0} + \beta\!\ket{1}\bigr)\bigl(
\ket{00} + \ket{11} \bigr) \]
by projecting to the Bell basis (see footnote~\ref{footnote:pitfall} `Pitfall' above), the result is 
\[ P\ket{\Phi}=  \mfrac{1}{\sqrt{2}} P \bigl( \ket{00} + \ket{11}  \bigr) P\bigl(
\alpha \! \ket{0} + \beta\!\ket{1}\bigr),\]
where $P$ is one of $I$, $X$, $Z$ or $XZ$ each with equal probability. Thus Bob now has the qubit $\alpha \!\ket{0} + \beta\!\ket{1}$
originally held by Alice, up to a Pauli correction that Alice will have to send him by two classical bits;
this correction is the only reason why quantum teleportation does not violate causality.
(Look up `No communication theorem' for a precise formulation of this,
and see also \S\ref{subsec:noCloning} for the related `No cloning theorem'.)
Quantum teleportation is more elegant in the ZX-calculus (you can easily find this on the web,
but of course~I like 
the account in my blog post \url{https://wildonblog.wordpress.com/2021/10/24/q-is-for-quantum/} reviewing
Rudolph's book \emph{$Q$ is for quantum} \cite{RudolphQ})
and exceptionally elegant in the string-diagram formalism, where it becomes
the basic yank relation: see for instance \cite[\S 3c]{Coecke05}.}
% Omitted details of the calculation, taken from my blog post

\subsection{The failure of spin polarisation${}^\star$}
Spin polarisation is the principle that any single qubit\footnote{\textbf{Lie:} 
this is only true for particles of spin $\mfrac{1}{2}$ such as the electron. Photons are mathematically
an equally good model for a qubit, but physically behave differently, 
because of circular polarisation.} %diagonal polarisation X is still a superposition of horizontal and vertical
has a well-defined spin in some direction $\vecn{n}$ in $\R^3$. \marginpar{\raisebox{-18pt}{\scalebox{0.9}{$\begin{matrix}
Z = \left(\begin{matrix} 1 & 0 \\ 0 & -1 \end{matrix}\right) \\[12pt]
X = \left(\begin{matrix} 0 & 1 \\ 1 & 0 \end{matrix}\right) \\[12pt]
Y \!=\! iXZ \!=\! \left(\begin{matrix} 0 & -i \\ i & 0 \end{matrix}\right) \end{matrix}$}}}
We proved it in Exercise~\ref{ex:spin} by showing that 
the normalized state $\alpha\! \ket{0} + \beta\! \ket{1}$
is an eigenstate of $n_Z Z + n_X X + n_Y Y$ for suitable $(n_Z, n_X, n_Y) \in \R^3$.
That this exercise is non-trivial should perhaps warn us that this principle is not nearly as obvious
as it might seem from our naive idea of magnetism or spinning tops. (Hence all the fuss about
the Bloch sphere in \S\ref{subsec:Bloch}.)
The Bell state shows that spin polarisation already fails for two qubits, or more.

\begin{exercise}
Alice has the first qubit of a Bell state and Bob the second.
Show that the expected value of an Alice measurement of $Z$ on the first qubit is zero.
(The general definition of $Z$-basis measurement in Definition~\ref{defn:measureZpartial}, but
you will probably guess the right thing to do: it is mathematically intuitive, even if the physical
consequences are less so!)
Show that the same holds if Alice switches from measurement in the $\vecn{z}$ direction to 
an arbitrary direction $\vecn{n} \in \R^3$.\footnote{\textbf{Solution:} 
when Alice measures the first qubit of \smash{$\mfrac{1}{\sqrt{2}} \bigl( \ket{00} + \ket{11}\bigr)$} 
in the $Z$-basis, she projects to $\ket{00}$ and $\ket{11}$ with equal probability;
the measurement results are $0$ and $1$, corresponding to eigenvalues $+1$ and $-1$ respectively.
Therefore her results are flat random and the expectation is $0$.
% measuring by Hermitian operator K gives <\psi | K | \psi> projecting to K\psi, e.g. for Z
% <00 + 11 | 00 - 11> = 0
More generally, using the formula $\braket{\psi}{K\psi}$ for the expectation value
of the Hermitian operator $K$ on the normalized state $\ket{\psi}$ we have
\[ \begin{split} \langle \psi | n_X (X \otimes I) +{}&{} n_Y (Y \otimes I) + n_Z (Z \otimes I) \psi \rangle
\\ &= n_X \langle \psi | X \psi\rangle + in_Y \langle \psi | XZ \psi \rangle + n_Z \langle \psi | Z \psi\rangle
= 0 + 0 + 0 = 0; \end{split} \] 
this symmetry of the Bell state becomes less surprising if one calculates
that \smash{$\ket{\Phi} = \mfrac{1}{\sqrt{2}} \bigl(\ket{+}\ket{+} + \ket{-}\ket{-}\bigr)$}; on this
calculation, I cannot resist
quoting Maudlin's book \emph{Philosophy of physics: Quantum theory} 
\cite[page 71, footnote 6]{Maudlin}: `\emph{It's just a little painless algebra that anyone can do, and
doing it produces a sense of accomplishment and understanding that can be acquired in no other way}'.}
%Incidentally the bra-ket pair above is often separated as $\langle \psi|K|\psi\rangle$ to emphasise
%the symmetry: but note however that $\langle \psi | U $ is the dual element corresponding to $U^\dagger \ket{\psi}$; this is not a problem in the physicists bra-ket notation because invariably when one 
%sees $\langle \psi|K|\psi \rangle$ it is in the context of a Hermitian $K$ and so $K = K^\dagger$.
Resolve the apparent paradox, as stated in \cite[page 167]{SusskindFriedmanQM}: `How can that be? How could we know
\emph{as much as can possibly be known} about the Alice-Bob system of two spins,
and yet know \emph{nothing} about the individual spins that are its 
subcomponents?'\footnote{\textbf{Solution(?):} for this author, a satisfactory resolution of the paradox is to declare
that, in the context of the Bell state --- and more generally in any state not factorizing as a tensor product of a state in $\HH$ and a state in $\HH^{\otimes (n-1)}$ --- 
there is not such thing as an individual spin. This is not satisfactory to many physicists,
since they are well used to measuring exactly this quantity;
see footnote~\ref{footnote:statesMeasurements} for a less precise
variation on this theme.} I took `the theoretical minimum'
in the title for these notes from the title of
\cite{SusskindFriedmanQM}, and can highly recommend it
as an introduction to quantum mechanics.
\end{exercise}

\begin{exercise}\label{ex:BellObservables}
Show that $E = X \otimes X + Y \otimes Y + Z\otimes Z$ is an observable of 
states in the Bell basis (or stated mathematically, the Bell states are its eigenvectors)
and that its eigenvalues distinguish the \emph{spin singlet}
state \smash{$\mfrac{1}{\sqrt{2}} \bigl( \ket{01} - \ket{10}\bigr)$} from the other three elements
of the Bell basis.
What is the representation theoretic interpretation of this?\footnote{\textbf{Answer:}
we have seen that $X \otimes X$ and $Z \otimes Z$ are stabilisers of $\ket{\Phi}$. Hence
so is $(X \otimes X)(Z \otimes Z) =  -iXZ \otimes iXZ =-Y \otimes Y$
and it follows that $E \ket{\Phi} = (1+1-1) \ket{\Phi} = \ket{\Phi}$. Using that distinct Pauli
matrices anticommute, we have 
\begin{align*} E(X \otimes I) 
&= (X \otimes X)(X\otimes I) + (Y \otimes Y)(X \otimes I) + (Z \otimes Z)(X\otimes I) \\
&= (X \otimes I) (X \otimes X) - (X \otimes I) (Y \otimes Y) - (X \otimes I) (Z \otimes Z),\end{align*}
and so, using the unambiguous notation of the previous subsection,
$E(X \ket{\Phi}) = X \ket{\Phi} + X \ket{\Phi} - X \ket{\Phi} = X\ket{\Phi}$.
Similarly $E(Z \ket{\Phi}) = Z\ket{\Phi}$. 
For $XZ$ we have
\[ E (XZ \otimes I) = - (XZ \otimes I)(X \otimes X) + (XZ \otimes I) (Y\otimes Y) - (XZ \otimes I)(Z \otimes Z) \]
and so $E(XZ \ket{\Phi}) = -XZ\ket{\Phi} - XZ\ket{\Phi} - XZ\ket{\Phi} = -3XZ\ket{\Phi}$.
Therefore $E$ has eigenvalue $-3$ on the spin singlet state $XZ\ket{\Phi}$ 
and eigenvalue $1$ on the other Bell basis states.
The representation-theoretic significance is that $E$ commutes with the action
of $\SU_2$ on $\HH \otimes \HH$ and so, as expected from Schur's Lemma,
its eigenspaces split $\HH \otimes \HH$ 
into its two irreducible subrepresentations, namely $\bigwedge^2\hskip-1pt \HH = \bigl\langle \hskip1pt \ket{01} - \ket{10} \bigr\rangle$
and $\Sym_2 \!\HH = \bigl\langle \hskip1pt \ket{00}, \ket{11}, \ket{01} + \ket{10} \bigr\rangle$.
If you have ever sat through a lecture on $\mathsf{sl}_2(\C)$ in which the lecturer pulled the Casimir
element $\mfrac{1}{2} h \otimes h + e \otimes f+f \otimes e \in \mathcal{U}(\mathsf{sl}_2(\C))$ out of a hat, 
you might be amused that
interpreted in $\su_2$, it is, up to a scalar, %$(iX)\otimes(iX) + (iY)\otimes (iY) + (iZ) \otimes (iZ) = 
$X \otimes X + Y \otimes Y + Z \otimes Z$, lying in the quadratic component of the universal enveloping algebra
$\mathcal{U}(\su_2)$. Here the $\mathsf{sl}_2(\C)$-generators are
\[ h = \left(\begin{matrix} 1 & 0 \\ 0 & -1 \end{matrix} \right), \ e = \left(\begin{matrix} 0 & 1 \\ 0 & 0 \end{matrix}\right)
,\ f = \left(\begin{matrix} 0 & 0 \\ 1 & 0 \end{matrix}\right), \]
and the claim follows from the isomorphism $\su_2 \otimes_\mathbb{R} \C \cong \mathsf{sl}_2(\C)$
defined by $-\mfrac{iZ}{2} \mapsto \mfrac{ih}{2}$, $-\mfrac{iX}{2} \mapsto \mfrac{1}{2}(e-f)$, 
\smash{$-\mfrac{iY}{2} \mapsto \mfrac{i}{2}
(e+f)$}, where the $\su_2$ generators are as in footnote~\ref{footnote:adjointRep}.
This is one of the best examples the author knows where a physical motivation has a clear benefit
in a non-trivial pure mathematics problem.} % possible extra Casimir in sl_2 or su_2 ...
\end{exercise}

\section{Many qubits: quantum computation}\label{sec:manyQubits}

In this section we show that quantum computers are amazingly good at computing the Discrete Fourier
Transform. From the point of view of this section, which emphasises
the symmetry between the $Z$-basis and $X$-basis,
this comes down to a simple change-of-basis.
As a corollary we prove that a quantum computer can solve
certain problems on Boolean functions
using exponentially fewer logic gates than classical computers.

\subsection{$Z$-basis} \label{subsec:Zbasis}
The correct
 way to model $n$ qubits is by the tensor product $\HH \otimes \cdots \otimes \HH$ which we
usually denote $\HH^{\otimes n}$. 
Given $v \in \mathbb{F}_2^n$ let 
\[ \ket{v} = \ket{v_1} \otimes \ket{v_2} \otimes \cdots \otimes \ket{v_n} \in \HH^{\otimes n}. \]
Thus $\HH^{\otimes n}$ is a $2^n$-dimensional space having as an orthonormal basis $\ket{v}$ for $v \in \mathbb{F}_2^n$.
This is  the $Z$-\emph{basis} of $\HH^{\otimes n}$ of eigenvectors for $Z \otimes Z \otimes \cdots \otimes Z$.
Note this notation is consistent with our earlier $\ket{0}, \ket{1} \in \HH$ and $\ket{00}, \ket{01}, \ket{10},
\ket{11} \in \HH \otimes \HH$. We emphasise the distinction between the two vector spaces $\HH^{\otimes n}$ and $\F_2$:
\begin{exlist}
\item[$\bullet$] $\HH^{\otimes n}$ is the $2^n$-dimensional complex Hilbert space modelling $n$  qubits;
\item[$\bullet$] $\F_2^n$ is a finite $\F_2$-vector space whose elements index the $Z$-basis of~$\HH^{\otimes n}$.\footnote{\textbf{Aside, please ignore if it is not helpful to you:}
a slightly similar indexing scheme occurs in the theory of $\theta$ functions where
lattice points in $\R^2$ index an orthonormal basis of an infinite dimensional Hilbert space.}
\end{exlist}
We say that a state $\ket{\psi}$ is \emph{normalized} if $\braket{\psi}{\psi} = 1$, or equivalently,
if \smash{$\sum_{v \in \F_2^n} |a_v|^2 = 1$}, where \smash{$\ket{\psi} = \sum_{v \in \F_2^n} a_v \ket{v}$}.

\begin{definition}[$Z$-basis measurement of all qubits]\label{defn:measureZall}
Let \smash{$\ket{\psi} = \sum_{v \in \F_2^n} a_v \ket{v}$} be a normalized state.  \emph{Measuring $\ket{\psi}$
in the $Z$-basis} projects it to $\ket{v}$ with probability $|a_v|^2$. The measurement result is $v \in \F_2^n$.
\end{definition}

Again let us observe the traditional pause to remember that \emph{measurement changes the state}.

\subsection{Transverse Hadamard}\label{subsec:transverseHadamard} 
We saw earlier that the one-qubit Hadamard gate satisfies $H\ket{0} = \ket{+}$ and $H\ket{1}= \ket{-}$.
It will now be useful to combine the cases by writing 
\[ H \!\ket{b} = \mfrac{1}{\sqrt{2}} \bigl( \ket{0} + (-1)^b
\ket{1}\bigr). \tag{$\star$} \]
for $b \in \F_2$.
The $n$-fold tensor product $H \otimes \cdots \otimes H = H^{\otimes n}$ is a unitary transformation
of $H^{\otimes n}$; it is sometimes called \emph{transverse Hadamard} to emphasise that one $H$ gate is applied
to each qubit separately. Let $\cdot$ denote the dot product on~$\F_2^n$, defined as usual by $v \cdot w 
= v_1w_1 + \cdots + v_n w_n \in \F_2$.

\begin{lemma}[Transverse Hadamard]\label{lemma:transverseHadamard}
Given $v \in \F_2^n$ we have
\[ H^{\otimes n} \ket{v} = \frac{1}{2^{n/2}} \sum_{w \in \F_2^n} (-1)^{v \cdot w} \ket{w}. \]
\end{lemma}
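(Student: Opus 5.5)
The plan is to use the single-qubit formula ($\star$) on each tensor factor and then expand the resulting product of sums by multilinearity of the tensor product. Since $\ket{v} = \ket{v_1}\otimes\cdots\otimes\ket{v_n}$ and $H^{\otimes n}$ acts factorwise, we have
\[ H^{\otimes n}\ket{v} = H\ket{v_1}\otimes\cdots\otimes H\ket{v_n}
= \frac{1}{2^{n/2}} \bigotimes_{i=1}^n \bigl( \ket{0} + (-1)^{v_i}\ket{1}\bigr). \]
First I would rewrite each factor uniformly as $\sum_{w_i \in \F_2} (-1)^{v_i w_i}\ket{w_i}$. This is immediate: the term $w_i = 0$ has sign $(-1)^0 = 1$, and the term $w_i = 1$ has sign $(-1)^{v_i}$.

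Next I would expand the tensor product of these $n$ two-term sums by distributivity. Each choice of one term from each factor corresponds to a vector $w = (w_1,\ldots,w_n) \in \F_2^n$, and contributes
\[ \prod_{i=1}^n (-1)^{v_iw_i}\, \ket{w_1}\otimes\cdots\otimes\ket{w_n} = (-1)^{\sum_i v_iw_i}\ket{w}. \]
The one point needing care is the sign. The exponent $\sum_i v_i w_i$ is an integer sum, whereas $v\cdot w$ is computed in $\F_2$. These agree because $(-1)^k$ depends only on $k$ modulo $2$. Summing over all $w\in\F_2^n$ then gives the stated formula.

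There is no real obstacle. The only subtleties are bookkeeping: justifying that $H^{\otimes n}$ acts factorwise on pure tensors (this is the definition of the tensor product of linear maps), and observing that $(-1)^{v\cdot w}$ is well-defined with the exponent in $\F_2$. If a more formal argument were wanted, I would instead induct on $n$, writing $v = (v', v_n)$ and $\ket{v} = \ket{v'}\otimes\ket{v_n}$, and using $(-1)^{v'\cdot w' + v_nw_n} = (-1)^{v\cdot w}$.
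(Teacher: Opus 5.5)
Your proposal is correct and follows the paper's proof: apply ($\star$) to each factor, rewrite each factor as $\sum_{w_i\in\F_2}(-1)^{v_iw_i}\ket{w_i}$, and expand the tensor product by multilinearity. Your added remark, that the integer exponent $\sum_i v_iw_i$ can be replaced by $v\cdot w\in\F_2$ because $(-1)^k$ depends only on $k$ modulo $2$, makes explicit a step the paper leaves implicit.
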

\begin{proof}
Using ($\star$) for the second equality we have
\begin{align*}
H^{\otimes n} \ket{v} &= H \ket{v_1} \otimes \cdots \otimes H \ket{v_n} \\
&= \mfrac{1}{\sqrt{2}} \bigl( \ket{0} + (-1)^{v_1} \ket{1} \bigr) \otimes \cdots \mfrac{1}{\sqrt{2}} 
\bigl( \ket{0} + (-1)^{v_n} \ket{1} \bigr) \\
&= \frac{1}{2^{n/2}} \Bigl( \sum_{w_1 \in \F_2}  (-1)^{v_1w_1} \ket{w_1} \Bigr) 
\otimes \cdots \otimes \Bigl( \sum_{w_n \in \F_2}  (-1)^{v_nw_n} \ket{w_n} \Bigr) \\
&= \frac{1}{2^{n/2}}\sum_{w_1, \ldots, w_n \in \F_2} (-1)^{v_1w_1 + \cdots + v_nw_n} \ket{w_1} \ldots \ket{w_n} \\
&= \frac{1}{2^{n/2}}\sum_{w \in \F_2^n} (-1)^{v \cdot w} \ket{w} 
\end{align*}
as required.
\end{proof}

\subsection{$X$-basis}\label{subsec:Xbasis}
The $X$-\emph{basis} of $\HH^{\otimes n}$ is $H^{\otimes n} \ket{v}$ for $v \in \F_2^n$. % Do we want the ${}^\pm$ notation?
For example the $X$-basis of $\HH^{\otimes 2}$ is shown left below, with the change of basis matrix $H^{\otimes 2}$
to the right, using our usual order $\ket{00}, \ket{01}, \ket{10}, \ket{11}$.
\[ \begin{matrix}
\ket{+}\ket{+} = H^{\otimes 2} \ket{00} = \mfrac{1}{2} \bigl( \ket{00} + \ket{01} + \ket{10} + \ket{11} \bigr) \\[3pt]
\ket{+}\ket{-} = H^{\otimes 2} \ket{01} = \mfrac{1}{2} \bigl( \ket{00} - \ket{01} + \ket{10} - \ket{11} \bigr) \\[3pt]
\ket{-}\ket{+} = H^{\otimes 2} \ket{10} = \mfrac{1}{2} \bigl( \ket{00} + \ket{01} - \ket{10} - \ket{11} \bigr) \\[3pt]
\ket{-}\ket{-} = H^{\otimes 2} \ket{11} = \mfrac{1}{2} \bigl( \ket{00} - \ket{01} - \ket{10} + \ket{11} \bigr) \\

\end{matrix}
\quad\mfrac{1}{2}\left( \begin{matrix} 1 & 1 & 1 & 1 \\ 1 & -1 & 1 & -1 \\ 1 & 1 & -1 & -1 \\ 1 & -1 & -1 & 1 \end{matrix}
\right)
\]
Up to the scalar $2^{-n/2}$, $H^{\otimes n}$ is the character table of $\F_2^n$, having
as its rows the characters of $\F_2^n$. Depending
on your background, this may make the application to Boolean functions below  seem almost inevitable;
if not, well great, you can now regard a large part of classical cryptanalysis as an application of 
basic character theory.

\subsection{Controlled gates}
Given a Boolean function $f : \F_2^n \rightarrow \F_2$ 
on $n$ bits,  how can we implement $f$ as a quantum gate? 
For example, what unitary transformation will encode logical AND, defined by $f(x,y) = xy$?
It is well worth finding the answer for yourself. As a hint, consider the case $n=1$ aiming
to implement NOT using two qubits. Remember that since quantum gates are unitary maps, and so
invertible, the input bits must be recoverable from the output.

\begin{definition}[Controlled gates]\label{defn:controlledGate}
Given \smash{$f : \F_2^n \rightarrow \F_2$} a Boolean function, \emph{controlled $f$} is the unitary map on $n+1$
qubits defined on the $Z$-basis of $\HH^{\otimes (n+1)}$ by
\[ U_f \ket{v} \ket{b} = \ket{v} \ket{b + f(v)} \]
where $v \in \F_2^n$ and $b\in \F_2$.
\end{definition}

Observe that since the $Z$-basis of $\HH^{\otimes (n+1)}$ is permuted by $U_f$, this map is unitary, as required.
Taking $n=1$ and $f(x) = x$ we recover the CNOT
operator\marginpar{
\raisebox{114pt}{\Qcircuit @C = 24pt @R = 15pt{ 
  &  \ctrl{3} & \qw \\ 
  & \ctrl{2} & \qw \\
  & \ctrl{1} & \qw \\  
 & \gate{U_f} & \qw }}} 
 and taking $n=2$ and $f(x,y) = xy$ gives the Toffoli
 gate (see \cite[page 159]{NielsenChuang}) that is the quantum
 analogue of logical AND.
 The circuit diagram in the margin shows how to draw $U_f$ when $n=3$;
 the three black dots indicate that 
the output qubit is controlled on all $3$ input qubits. 
For general $n$ a tick is used
to indicate there are $n$ input wires.
\marginpar{\raisebox{16pt}{\Qcircuit @C=12pt { & \qw{/n} & \ctrl{1} & \qw & \qw\\ 
& \qw & \gate{U_f} & \qw & \qw }}}

If you would like to find your own path to Theorem~\ref{thm:QDFT} a good start is
the following question. For the author, at least half its interest lies in the fact that it can be stated
\emph{at all}: do programs in your favourite classical programming language
have eigenvectors?

\begin{exercise}
What are the eigenvectors 
and eigenvalues of $U_f$?\footnote{\textbf{Solution:} 
for each $w \in \F_2^n$, $\ket{w}\ket{+}$ is an eigenvector with eigenvalue $1$;
$\ket{w}\ket{-}$ is an eigenvector with eigenvalue \smash{$(-1)^{f(w)}$}. This is part of the
calculation in \S\ref{subsec:phase}.}
\end{exercise}

\subsection{Graph states}
Generalizing Exercise~\ref{ex:makeBell}, considering what happens when we apply $U_f$ to an input
that is not a $Z$-basis state, but instead a superposition, such as $\ket{+}\ket{+}$ seen in \S\ref{subsec:Xbasis}.
As in this exercise, it is immediate from linearity that
\[ U_f \bigl( \ket{+} \ldots \ket{+} \ket{0} \bigr) = \mfrac{1}{2^{n/2}} \sum_{v \in \F_2^n} U_f \ket{v} \ket{0} =
\mfrac{1}{2^{n/2}} \sum_{v \in \F_2^n} \ket{v}\ket{f(v)}. \]
Thus it is tempting to say that a quantum computer computes `all the values of a function at once'. The state above
is the quantum encoding of the graph of $f$.
For the circuit diagram see Figure~\ref{fig:Uf} below.

\begin{exercise}
What is the result of measuring the graph state above in the $Z$-basis? Are you yet persuaded
to splash out on a quantum computer?\footnote{\textbf{Solution:}
by Definition~\ref{defn:measureZall}, the measurement projects the state to
some $\ket{v}\ket{f(v)}$ where $v$ is distributed
uniformly at random on $\F_2^n$. This is clearly \emph{worse} than a classical computer, because you don't
even get to choose~$v$.}
\end{exercise}

Now see what happens if we change to the $X$-basis \emph{everywhere}.

\subsection{Moving the output to the phase}\label{subsec:phase}
A standard trick in quantum computing is to move a computational result such as $\ket{f(v)}$ to the phase
of its qubit, by replacing an input $\ket{0}$ with $\ket{-}$: thus
\[ \begin{split} U_f \ket{v}\!\ket{-} &=
\mfrac{1}{\sqrt{2}} U_f \bigl( \ket{v}\!\ket{0} \hskip1pt-\hskip1pt \ket{v}\!\ket{1}\bigr) \\
&\qquad = \mfrac{1}{\sqrt{2}} 
\bigl( \ket{v}\!\ket{f(v)}
\hskip1pt-\hskip1pt \ket{v}\bigl| \hskip1pt \overline{f(v)}\hskip1pt \bigr\rangle \bigr) = (-1)^{f(v)} \ket{v}\! \ket{-} 
\end{split} \]
and so, as shown in the middle circuit in Figure~\ref{fig:Uf},
\[ U_f \bigl( \ket{+} \ldots \ket{+} \ket{-} \bigr) =
\mfrac{1}{2^{n/2}} \sum_{v \in \F_2^n} U_f \ket{v} \ket{-} =
\mfrac{1}{2^{n/2}} \sum_{v \in \F_2^n} (-1)^{f(v)} \ket{v}\ket{-}. \]

\subsection{The Quantum Discrete Fourier Transform in $\F_2^n$.}\label{subsec:QDFT}
We now move the output to the $X$-basis as well by applying a final transverse Hadamard on the first $n$-qubits.
Calculating using Lemma~\ref{lemma:transverseHadamard} and the previous result gives
\begin{align*} (H^{\otimes n} \otimes I) U_f \bigl( \ket{+} \ldots \ket{+} \ket{-} \bigr)
&= \frac{1}{2^{n/2}} \sum_{v \in \F_2^n} (-1)^{f(v)} \bigl( H^{\otimes n}\ket{v}\bigr) \ket{-} \\
&= \frac{1}{2^{n/2}} \sum_{v \in \F_2^n} (-1)^{f(v)} \sum_{w \in \F_2^n} (-1)^{v \cdot w}\ket{w} \ket{-} \\
&= \sum_{w \in \F_2^n} \frac{1}{2^n} \Bigl( \sum_{v \in \F_2^n} (-1)^{f(v) + v\cdot w}\Bigr) \ket{w} \ket{-} \tag{$\dagger$}.
\end{align*}
The quantity $\frac{1}{2^n}\sum_{v \in \F_2^n} (-1)^{f(v) + v \cdot w}$ is the \emph{correlation} between
$f$ and the linear map $v \mapsto v \cdot w$. We shall denote it $c_f(w)$. Observe that
\[ c_f(w) = \PP_v[f(v) = v \cdot w] - \PP_v[f(v) \not= v \cdot w] \] 
where the probability is with respect to $v$ uniformly distributed over $\F_2^n$. 
Since $H^{\otimes n} \otimes I$ is unitary, and so norm preserving,
we have \smash{$\sum_{v \in \F_2^n} c_f(v)^2 = 1$}.
Correlations even slightly greater than the average $2^{-n/2}$ can often be used
as the basis of a cryptographic attack, but, using only a classical computer,
 it appears there is no efficient way to find them.

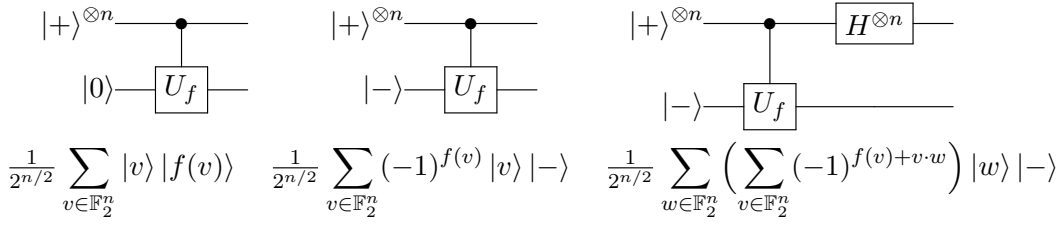
\begin{figure}[t]
\begin{align*}& \hspace*{-5pt}\qquad \Qcircuit @R=15pt @C=15pt{ & \llap{$\ket{+}^{\otimes n}$} & \ctrl{1} & \qw \\
  & \llap{$\ket{0}$}  & \gate{U_f} & \qw }  
 \qquad\qquad 
  \Qcircuit @R=15pt @C=15pt{ & \llap{$\ket{+}^{\otimes n}$} & \ctrl{1} & \qw \\
  & \llap{$\ket{-}$}  & \gate{U_f} & \qw }   
  \qquad\qquad \
    \Qcircuit @R=15pt @C=15pt{ & \llap{$\ket{+}^{\otimes n}$} & \ctrl{1} & \gate{H^{\otimes n}} & 
    \qw \\
  & \llap{$\ket{-}$}    & \gate{U_f} & \qw  & \qw }  \\  
  &  \hspace*{-9pt} \mfrac{1}{2^{n/2}} \sum_{v \in \F_2^n} \ket{v}\ket{f(v)}\ \ \
  \mfrac{1}{2^{n/2}} \sum_{v \in \F_2^n} (-1)^{f(v)} \ket{v}\ket{-}  
    \ \ \ \mfrac{1}{2^{n/2}} \sum_{w \in \F_2^n} \Bigl( \sum_{v \in \F_2^n} (-1)^{f(v) + v\cdot w} \Bigr)
  \ket{w}\ket{-} 
  \end{align*}
\caption{\small Final outputs of circuit for $U_f$ where $f : \F_2^n \rightarrow \F_2$
is a Boolean function: (1) graph circuit; (2) result moved to phase; (3) with $H^{\otimes n}$,
computing $\sum_{w \in \F_2^n} c_f(w) \ket{w} \ket{-}$. \label{fig:Uf}}
\end{figure}

\begin{theorem}[Quantum Discrete Fourier Transform in $\F_2^n$]\label{thm:QDFT}
Let $f : \F_2^n \rightarrow \F_2$ be a Boolean function.
There is a quantum circuit with input \smash{$\ket{0}^{\otimes (n+1)}$}, using 
one $X$ gate, $2(n+1)$ Hadamard gates and one controlled $U_f$ gate,
that prepares the state \smash{$\sum_{v \in \F_2^n} c_f(v) \ket{v}\ket{1}$}.
Measuring this state in the $Z$-basis returns $\ket{v}\ket{1}$ 
with probability $c_f(v)^2$.
Computing the exact value of any single $c_f(v)$ by a classical computer requires $2^n$ evaluations of $f$.
\end{theorem}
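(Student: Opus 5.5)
The plan is to build the circuit explicitly and then read off the output state from the calculation ($\dagger$) in \S\ref{subsec:QDFT}. Start with $\ket{0}^{\otimes(n+1)}$ and apply an $X$ gate to the last qubit to get $\ket{0}^{\otimes n}\ket{1}$. Then apply $H^{\otimes(n+1)}$, using $n+1$ Hadamard gates, to obtain $\ket{+}\ldots\ket{+}\ket{-}$. Next apply $U_f$, and finally apply $H$ to every qubit again, using another $n+1$ Hadamard gates, for $2(n+1)$ in total.

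By \S\ref{subsec:phase} and ($\dagger$), after $U_f$ and $H^{\otimes n}\otimes I$ the state is $\sum_{w} c_f(w)\ket{w}\ket{-}$. The last Hadamard on the final qubit sends $\ket{-}$ to $\ket{1}$, because $H$ has order $2$. This gives $\sum_{v} c_f(v)\ket{v}\ket{1}$, as claimed.

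The measurement statement follows directly from Definition~\ref{defn:measureZall}. The state is normalized, since unitaries preserve norm and the input is normalized; equivalently $\sum_v c_f(v)^2=1$, as already noted. The coefficient of $\ket{v}\ket{1}$ is $c_f(v)$, which is real, and every basis vector with last bit $0$ has coefficient zero. So measurement returns $\ket{v}\ket{1}$ with probability $c_f(v)^2$.

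The classical lower bound is the main obstacle, since it is the only part that is not a computation. I would use an adversary argument. Suppose a deterministic classical algorithm computes $c_f(w)$ exactly for a fixed $w$, and suppose it evaluates $f$ at fewer than $2^n$ points. Then some $u\in\F_2^n$ is never queried. Let $f'$ agree with $f$ except that $f'(u)=f(u)+1$. The algorithm, which adapts to the answers it receives, sees identical answers on $f$ and $f'$ and so outputs the same value for both. However, flipping a single value changes the sign of one term in $\sum_v(-1)^{f(v)+v\cdot w}$, so
\[ c_{f'}(w)-c_f(w) = \pm 2\cdot 2^{-n} \neq 0. \]
This is a contradiction, so all $2^n$ evaluations are needed.

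The care needed is in phrasing the model: an algorithm with black-box access to $f$, with queries possibly chosen adaptively. The same argument handles adaptivity, because the transcripts of the algorithm on $f$ and $f'$ coincide. I would also remark that $2^n$ evaluations clearly suffice, so the bound is exact.
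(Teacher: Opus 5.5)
Your proof is correct and follows the paper's: the same circuit ($X$ on the last qubit, then $H^{\otimes(n+1)}$, then $U_f$, then $H^{\otimes(n+1)}$), the same appeal to ($\dagger$), and the same final Hadamard turning $\ket{-}$ into $\ket{1}$. The paper only asserts that the classical bound follows from the definition of $c_f$. Your adversary argument fills this in correctly: flipping $f$ at an unqueried point leaves a deterministic black-box algorithm's transcript unchanged but shifts $c_f(w)$ by $\pm 2^{1-n}$. This elaborates the paper's route rather than departing from it.
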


\begin{proof}
Since 
\[ H^{\otimes (n+1)} (I^{\otimes n} \otimes X) \ket{0}^{\otimes n} = (H^{\otimes n} \otimes H)
\bigl(\ket{0}^{\otimes n} \otimes \ket{1}\bigr) = \ket{+}^{\otimes n} \ket{-} \]
we can prepare 
the input state for the rightmost circuit in Figure~\ref{fig:Uf} using~$n$ Hadamard gates and one $X$-gate.
The circuit itself uses one $U_f$ gate and~$n$ Hadamard gates, and we use one further Hadamard gate
to switch the final qubit $\ket{-}$ to $\ket{1}$, just to make the final $Z$-basis measurement deterministic
on the final qubit.
The theorem now follows from the definition of correlation and the previous calculation~($\dagger$).
\end{proof}

%Using the Fast Fourier Transform, one can compute all $c_f(v)$ for $v \in \F_2^n$ on a classical
%computer in order of $2^n n$ operations. 
The circuit in Theorem~\ref{thm:QDFT} is 
shown left below. On the right is the equivalent version where we
prepare and measure in the $X$-basis. Its simplicity is a compelling demonstration of the power
of changing the basis!

\smallskip
\[ \hspace*{0.1in}
\Qcircuit @R=15pt @C=12pt{ & \llap{$\ket{0}^{\otimes n}$} & \qw & \gate{H^{\otimes n}} & \ctrl{1} & \gate{H^{\otimes n}}  
    & \meter{} \\
  & \llap{$\ket{0}$} & \gate{X} & \gate{H}   & \gate{U_f} & \gate{H}  &  \meter{} }  
  \qquad\qquad
\Qcircuit @R=15pt @C=15pt{ & \llap{$\ket{+}^{\otimes n}$} & \ctrl{1} & \meter \\
  & \llap{$\ket{-}$} & \gate{U_f}   &  \meter{} } \ \raisebox{-18pt}{$\begin{matrix} X \\[18pt] X \end{matrix}$}
\]

\smallskip
\subsection{Deutsch--Jozsa}
You are given $f : \F_2^n \rightarrow \F_2$ and told that \emph{either}
\begin{exlistB}
\item[$\bullet$] $f$ is constant \emph{or}
\item[$\bullet$] $f$ is balanced, equal to $0$ and $1$ with equal probability.
\end{exlistB}
The output of the circuit for Theorem~\ref{thm:QDFT} is $\ket{0} \ket{0}$  
in the first case, and, since a balanced function has zero correlation
with the constant `all-ones' function, $\ket{w}\ket{0}$ for some $w \not=0$ in
the second case. (Note this is after measuring.) Thus only one use of $U_f$ is required
to distinguish the two cases.
To decide classically requires, in the adversarial worst case, $2^{n-1} + 1$
evaluations of $f$. While of little (no?)~practical 
importance, the Deutsch--Jozsa Problem was
one of the first examples of a problem
that could be solved by a quantum algorithm \emph{exponentially faster}
than any deterministic classical algorithm.\footnote{\textbf{Objection:} 
three evaluations of $f$ give a classical algorithm guaranteed to succeed
when $f$ is constant, and with error probability 
$\mfrac{1}{4}$ when $f$ is balanced. This error probability can, as usual,
be made arbitrarily small by repeated runs of the algorithm. Thus the Deutsch--Jozsa problem
is in \textsf{BPP}. Simon's Problem shows that 
a probabilistic quantum algorithm (in the class \textsf{BQP}) 
may be exponentially faster than any probabilistic classical algorithm; 
again this uses the Quantum Discrete Fourier Transform
on $\F_2^n$ but now with~$n$ output bits: see \cite[\S 2.5]{Mermin}.}

\subsection{Shor's algorithm}
Shor's algorithm makes an ingenious use of the Quantum Discrete Fourier Transform,
but now working in the group $\Z/2^n\Z$, to find the period of, for instance, the doubling
function $x \mapsto 2^x$ mod $N$, and so (by an easy classical endgame that
uses a largish prime factor of $\phi(N)$) factor $N$.
Because the Fourier transform has to be taken in $\Z/2^n\Z$ rather than $\Z/\phi(N)\Z$,
there is a small probability of error\footnote{\textbf{Why:} for RSA numbers $N=pq$, 
because finding $\phi(pq) = pq-p-q+1$
is exactly as hard as factoring $pq$}; this can be made exponentially small
by running the algorithm polynomially many times.\footnote{\textbf{Reference:} 
see \cite[\S 5.1]{NielsenChuang}. Or better, read the easier and more motivated
following section \S 5.2 on phase estimation, and then work out the circuit in \S 5.1 for yourself. 
For an excellent exposition of the ideas
that follows this order, see \S 10.8 and \S 10.9 in \cite{EkertHosgoodKayMacchiavello}.}

\subsection{\textsf{BQP} and complexity theory${}^\star$}\label{subsec:BQP}
The complexity class \textsf{EXACTQP} is, roughly put, all problems solvable on a quantum computer 
starting at the $\ket{0}^{\otimes n}$ state for some chosen $n$, applying polynomially
many gates --- drawn from a fixed universal gate set\footnote{\textbf{Universal gate set:} \label{footnote:SolovayKitaev}
a popular choice is the $S$ gate (which satisfies $S^2=Z$), the Hadamard gate $H$, the
CNOT gate, and one extra gate, such as the $T$-gate
(which satisfies $T^2 = S$), or the $3$-qubit Toffoli gate; this extra gate makes the step from the finite Clifford subgroup
of $U_{2^n}$ to a countable dense subgroup of $U_{2^n}$. Roughly put, the Solovay--Kitaev theorem states
that an arbitrary unitary gate can be well-approximated by an efficient \emph{and efficiently findable}
composition of gates from the chosen generating set. Still
this approximation process introduces some overhead; 
this is one motivation for
measurement-based quantum computation, in which the overhead 
reappears as a sometimes more manageable cost
in making ancilla states.}
 --- and then
measuring the result in the $Z$-basis. (The principle of deferred measurement implies that,
perhaps contrary to one's intuition, allowing multiple measurements does not boost the computational power.\footnote{\textbf{Caveat:} provided no errors occur. Repeated measurement is vital to quantum error
correction, see \S\ref{subsec:measuringStabiliser}.})
Because quantum computers can be used to simulate
classical computers, \textsf{EXACTQP} contains \textsf{P}. Shor's algorithm lies in the bigger class
\textsf{BQP} in which a bounded probability of error is allowed. Thus Shor's algorithm
shows that \textsf{BQP} contains a problem widely believed to be in $\textsf{NP}$ but not in $\textsf{P}$.
It is possible that $\textsf{BQP}$ contains $\textsf{NP}$, or that $\textsf{NP}$ contains $\textsf{BQP}$,
and both possibilities are consistent with 
$\textsf{P} = \textsf{NP}$. It is much more widely believed
that neither of $\textsf{BQP}$ nor $\textsf{NP}$ contains the other; in this
case $\textsf{P} \not= \textsf{NP}$ and, as shown in Figure~\ref{fig:comp}, factoring is an example of a problem in
the intersection of $\textsf{BQP}$ and $\textsf{NP}$ that may not also be in $\textsf{P}$.\footnote{\textbf{Reference:} for much more on this
see \cite{Aaronson}.}

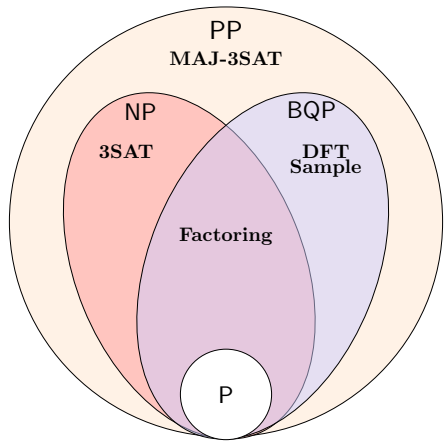
\begin{figure}
\begin{center}
\scalebox{0.75}{\begin{tikzpicture}

    \draw[fill=orange!10] (0.0, 1.25) circle (3.8cm);
    \node[scale=1.2] at (0, 4.65) {\textsf{PP}};

    \begin{scope}[rotate=22]
    \draw[fill=red!40, fill opacity=0.5] (-0.42, 0.68) ellipse (2.0cm and 3.2cm);
    \end{scope}
    \node[scale=1.1] at (-1.5, 3.2) {\textsf{NP}};

    \begin{scope}[rotate=-22]
    \draw[fill=blue!20, fill opacity=0.5] (0.42, 0.68) ellipse (2.0cm and 3.2cm);
    \end{scope}
    \node[scale=1.1] at (1.5, 3.2) {\textsf{BQP}};

    \draw[fill=white] (0,-1.8) circle (0.8cm);
    \node[scale=1.1] at (0, -1.8) {\textsf{P}};

    \node[scale=0.9, align=center] at (0, 1.0) {\textbf{Factoring}};
    \node[scale=0.9, align=center] at (1.75, 2.5) {\textbf{DFT}};
    \node[scale=0.9, align=center] at (1.75, 2.2) {\textbf{Sample}};
    \node[scale=0.9, align=center] at (-1.75, 2.5) {\textbf{3SAT}};
    \node[scale=0.9, align=center] at (0, 4.15) {\textbf{MAJ-3SAT}};

\end{tikzpicture}}\end{center}
\caption{\small One possible configuration of the complexity theory landscape. \textsf{PP} is (although not
by the usual definition) the class
of decision problems solvable in polynomial time with error probability
for \emph{either} answer strictly less than~$\mfrac{1}{2}$. 
For example 3SAT is in~\textsf{PP}: guess an assignment of the $n$ variables and if it
works, reports SAT; otherwise report UNSAT with probability $\mfrac{1}{2} + \mfrac{1}{2^{n+1}}$.
The decision problem \textbf{MAJ-3SAT} asks
`do the majority of assignments satisfy the 3SAT clauses': it is unlikely to have a polynomial
time certificate. Decision problems based on sampling DFT coefficients,
such as Deutsch--Jozsa and Simon's Problem are in $\mathsf{BQP}$, and (see the subsection
on oracles below) may not in $\mathsf{P}$. 
It is known that $\mathsf{P} \subseteq \mathsf{NP} \subseteq \mathsf{PP} 
\subseteq \mathsf{PSPACE} \subseteq \mathsf{EXPTIME}$
and that at least one of these containments is proper.\label{fig:comp} }

\end{figure}

\subsubsection*{Oracles}
It is important to note that the quantum gate $U_f$
and the classical subroutine implementing $f$ are each given as a black box.
For the Deutsch--Jozsa problem, it is conceivable that, knowing how $f$ is implemented as a sequence of polynomially many
classical logic operations, one can quickly tell whether~$f$ is constant or balanced, without
having to evaluate~$f$ at all.\footnote{\textbf{Aside:} to make this seem less likely,
reflect that the hash function SHA-256 is implemented as a sequence of polynomially many simple operations,
but almost certainly you do not have a quick way to produce inputs whose outputs  end with lots of zeros.}
Thus all we have proved is that \textsf{EXACT}\textsf{QP}, and so \textsf{BQP}, properly contain~\textsf{P} \emph{when both are taken
relative to a polynomial time oracle}. 

\subsection{Unitary gates and the Schr{\"o}dinger equation${}^\star$}\label{subsec:whyUnitary}
We finish by motivating our assumption that quantum gates are unitary, and making the connection
with quantum mechanics as you probably were first taught it.

\subsubsection*{Why unitary gates}
Our statement of the Born rule (see Definition~\ref{defn:measureZ}) 
 strongly motivates the axiom that quantum states evolve in a way
that \emph{preserves their norm}. It is a nice exercise in sesquilinear algebra to
show that if~$\mathcal{K}$ is a 
 Hilbert space and $L : \mathcal{K} \rightarrow \mathcal{K}$
is a norm-preserving linear map then~$L$ is unitary.\footnote{\textbf{Outline proof:}
given any $v$, $w \in \mathcal{K}$ we have $|| v + w||^2 = ||v||^2 + ||w||^2 + 2 \Re \langle v | w \rangle$ and $|| L(v+w) ||^2 = ||Lv||^2 + ||Lw||^2 + 2\Re \langle Lv | Lw \rangle $. Since
$L$ is norm preserving, we have \smash{$\Re \langle v | w \rangle = \Re \langle Lv | Lw \rangle = \Re \langle v |
\overline{L}^\transpose L w \rangle$}, implying that \smash{$\Re \langle v | (\overline{L}^\transpose L - I)w
\rangle =0$}. Since $v$ was arbitrary, we may take \smash{$v = (\overline{L}^\transpose L - I)w$} to deduce
that \smash{$\overline{L}^\transpose L - I = 0$}, i.e.~$L$ is unitary.}
 So let's take this also as axiomatic.
Then we can suppose that a gate is a smooth family of unitary maps $t \mapsto U(t)$
for $0 \le t \le 1$, and that what we see in our quantum computer are the outputs $U(1)\ket{\psi(0)}$
where $\ket{\psi(0)}$ is our starting state, for instance a zero qubit.\footnote{\textbf{Oversimplification:} we ignore
that gates might use measurement `under the hood'. Measurement
is norm-preserving by the Born rule, but of course not unitary.}
By a familiar argument from naive Lie theory, we have $U(\epsilon) = I + \epsilon M + O(\epsilon^2)$
for some matrix $M$ which, since $U(\epsilon)$ is unitary, satisfies
\[		(I + \epsilon \overline{M})^\transpose (I + \epsilon M) = I + O(\epsilon^2). \]
Thus $\overline{M}^\transpose + M = 0$, and so $\overline{M}^\transpose = -M$. That is,
$M$ is anti-Hermitian. We may therefore write $M = -iH$ for some Hermitian $H$. We choose
the minus sign so that the punchline of this section does not need a sign change.

\subsubsection*{To the Schr{\"o}dinger equation}
Mathematically, it is appealing to suppose that $U(t)$ is in fact the one-parameter
subgroup 
\[ 		t \mapsto \exp (-iHt) \]
of the unitary group on $\mathcal{K}$. Supposing this is the case, we find that
a starting state $\ket{\psi(0)}$ evolves as
$\ket{\psi(t)} = \exp( -iHt) \ket{\psi(0)}$, and so satisfies the differential equation
\[ \frac{\mathrm{d} \ket{\psi(t)}}{\mathrm{d} t} = -iH\! \ket{\psi(t)}. \]
The equation above is, up to a missing reduced Planck's constant, the Schr{\"o}dinger equation
\[ i \hbar \frac{\mathrm{d} \ket{\psi(t)}}{\mathrm{d} t} = H \! \ket{\psi(t)}. \]
One can work out that $\hbar$ 
should be on the left in the Schr{\"o}dinger equation by remembering one other
formula using Planck's constant and then using
dimensional analysis on the Hamiltonian~$H$.\footnote{\textbf{Units:} 
you do not need quantum mechanics 
to derive that energy has units $\mathrm{kg}\hskip1pt \mathrm{m}^2\hskip1pt \mathrm{s}^{-2}$.
From the formula $E = \hbar \nu$ for the 
energy of a photon in terms of its angular frequency, 
we deduce that $\hbar$ has units $\mathrm{kg} \hskip1pt\mathrm{m}^2 \hskip1pt\mathrm{s}^{-1}$. 
(This formula is $E = h f$ in terms of the usual frequency $f = \nu/ 2\pi$.)
Then since $H$ is a Hamiltonian, with units of energy, dimensional consistency in 
Schr{\"o}dinger equation requires that $\hbar$ appears on the left
in $ i \hbar \frac{\mathrm{d} \ket{\psi(t)}}{\mathrm{d} t}$. {\bf To do: natural units}}
(Another heuristic is that the energy given by applying the $H$ operator is very small, whereas derivatives
may be large, so we should scale the derivative by the tiny $\hbar  \approx 1.05 \times 10^{-34} \mathrm{kg}\hskip1pt \mathrm{m}^2\hskip1pt \mathrm{s}^{-1}$.)
See \cite[\S 9.2]{SusskindFriedmanQM} for why we should
now take $H = -\frac{\hbar^2}{2m} \frac{\mathrm{d}^2}{\mathrm{d}x^2} + V(x)$ 
to model a non-relativistic particle of mass $m$ moving in a potential~$V$; substituting 
and replacing $\ket{\psi(t)}$ with a traditional wave function $\psi(t,x)$ gives
Schr{\"o}dinger equation in the form
\[ i \hbar \frac{\mathrm{d} \psi(t,x)}{\mathrm{d} t} = -\frac{\hbar^2}{2m} \frac{\mathrm{d}^2\psi(t,x)}{\mathrm{d}x^2}  + V(x)\psi(t,x) \]
you might recall from a traditional first course in quantum mechanics.\footnote{\textbf{Opinion:}
isn't it about time such courses entered the 21st Century? Is it the misguided belief that 
solving differential equations is `easy' whereas Hilbert spaces such as $\C^2$ and operators are `hard' that
stops this happening? Or perhaps Hilbert spaces are indeed `hard':
the Hilbert space relevant to the Schr{\"o}dinger equation above
is the space of time-dependent square-integrable functions
on $\R$, on which the unitary operator $e^{-i H t}$ is defined
on a dense subset. See \cite[Ch.~14]{Hall} for a rigorous account of the Stone--von Neumann theorem.
}
% Aside: cannot be correct because relativistic invariance under Lorentz transform requires
% same order in x and t.

\section{Too many qubits: quantum error correction}\label{sec:tooManyQubits}

This section is particularly idiosyncratic: we first develop just enough
of the theory of stabiliser codes to give a carefully (maybe laboriously) motivated
quantum code that will correct a single $X$-error,
but will not guard against general quantum errors. 
We then show our construction is a special case of the key quantum technique
of `measuring a stabiliser' and use this to define error correction for the $[[7,1,3]]$-Steane code.

\subsection{Motivation: No cloning theorem}\label{subsec:noCloning}
In classical error correction we can discover the error,
provided it is not \emph{too} high weight,
by direct examination of the data.
The error is then a classical vector of bits. 
In quantum error correction,
the data is a quantum state,
and to avoid decohering it, we shall see that
we must take 
great care to learn only the syndrome of the error, and not the error itself. The error is resolved in the course of measurement 
to a combination of Pauli operators.
We make this precise in \S\ref{subsec:measuringStabiliser}.
And, since the map $\mathcal{H} \rightarrow \mathcal{H} \otimes \mathcal{H}$
defined by $\ket{\psi} \mapsto \ket{\psi}\ket{\psi}$ is not even linear, and, as defined here, not invertible,
there is no unitary map that 
we could use to back up quantum data. This is a simple version of the `No cloning theorem':
see \cite[\S 1.3.5]{NielsenChuang}. 

\subsection{Measurement for quantum error correction}
For quantum error correction one often needs to measure a subset of the qubits.
Definition~\ref{defn:measureZall} (measurement of all qubits in the $Z$-basis)
generalizes as follows.

\begin{definition}\label{defn:measureZpartial}[$Z$-basis measurement of some qubits]
Let
\[ \ket{\psi} = \sum_{v \in \F_2^m} \ket{\psi_v} \otimes \hskip1pt a_v\!\ket{v} \in \HH^{\otimes n} \otimes \HH^{\otimes m}
\]
 be a normalized state in which $\braket{\psi_v}{\psi_v} = 1$ for all $v \in \F_2^m$.
 \emph{Measuring $\ket{\psi}$ on the final $m$ qubits in the $Z$-basis}
projects it to $\ket{\psi_v}$ with probability $|a_v|^2$. The measurement result is $v$.
\end{definition}

For example measuring the Bell state \smash{$\mfrac{1}{\sqrt{2}} \bigl( \ket{00} + \ket{11} \bigr)$} 
in Exercise~\ref{ex:makeBell} on its second
qubit projects it to $\ket{00}$ and $\ket{11}$ with equal probability;
the measurement results are $0$ and $1$, respectively. Depending on your tolerance for the joke,
one last traditional pause might now be observed.
%For one final time, let us observe the traditional pause to note that \emph{measurement changes the state}.

\subsection{A toy code}\label{subsec:toyCode}
The classical
length three repetition code $\{000, 111\} \subseteq \F_2^3$ can correct a single bit flip. Motivated
by this, we might hope that the subspace $\CC$ of $\HH^{\otimes 3}$ spanned by $\ket{000}$ and $\ket{111}$
can be used as a quantum code to correct a single $X$-error. Here, by `used as a quantum code',
we mean that a general bare qubit $\alpha\! \ket{0} + \beta\! \ket{1}$ is \emph{encoded} in $\CC$ 
as the \emph{data state}
\[ \alpha\! \ket{000} + \beta\! \ket{111} \]
and we have available a set of \emph{logical gates} that implement the usual single qubit gates $X$, $Z$
and $H$ directly on the encoded qubits.\footnote{\textbf{Logical operations:} 
the transverse operator $X \otimes X \otimes X$ swaps $\ket{000}$ and $\ket{111}$
and so implements logical $X$, and either $Z \otimes I \otimes I$ or the transverse $Z \otimes Z \otimes Z$
fixes $\ket{000}$ and
flips the sign on $\ket{111}$, so implements logical $Z$. One limitation of this code
is shown by the difficulty of implementing logical Hadamard. The circuit
below implements logical
CNOT between two different code blocks; the example has input 
\smash{$\mfrac{1}{\sqrt{2}} \bigl(\ket{000} + \ket{111}) \bigr) \ket{000}$}
which is the tensor product of the logical plus \smash{$\mfrac{1}{\sqrt{2}} \bigl(\ket{000} + \ket{111})$}
 and logical zero states $\ket{000}$; the output is the logical Bell state.
\[
\hspace*{1.25in}\Qcircuit @R=15pt @C=12pt{   & \ctrl{3} & \qw & \qw & \qw \\ 
\llap{$\mfrac{1}{\sqrt{2}} \bigl( \ket{000} + \ket{111}\bigr)
\left\{
\phantom{ \begin{matrix} x \\ x \\ x \end{matrix}}\right.
 $}& \qw & \ctrl{3} & \qw& \qw  \\ & \qw & \qw & \ctrl{3} & \qw \\
  \llap{$\ket{0}$\ } & \targ{}  & \qw & \qw & \qw \\ \llap{$\ket{0}$\ }& \qw & \targ{}& \qw   & \qw \\ 
  \llap{$\ket{0}$\ }& \qw & \qw & \targ{}& \qw 
} \raisebox{-50pt}{$
\left. \phantom{ \begin{matrix}  x \\ x \\ x \\ x \\ x \\ x  \\ x \\ x \\ x\end{matrix}} \right\} 
\mfrac{1}{\sqrt{2}} \bigl( \ket{000}\ket{000} + \ket{111}\ket{111}\bigr)$}
\]
}
Decoding an output state known to be either $\ket{000}$ or $\ket{111}$ to a single classical bit 
is easy: just measure every qubit in the $Z$-basis. 
But this also shows the basic problem in quantum error correction.

\begin{exercise}
Suppose that the data state $\alpha \!\ket{000} + \beta \!\ket{111}$ 
has an $X$-error on the first qubit, and so it is
\[ (X \otimes I \otimes I) \bigl( \alpha \!\ket{000} + \beta \!\ket{111} \bigr)  = 
 \alpha\!\ket{100} + \beta\! \ket{011} .\]
Arguing that in a healthy data state, the first two qubits are equal (in each $Z$-basis summand),
we might decide to measure 
the first two qubits in the $Z$-basis.
Is this a good plan?\footnote{\textbf{Solution:} by
 Definition~\ref{defn:measureZpartial}, the measurement projects
 the data state 
to either $\ket{000}$ or $\ket{111}$ if there is no $X$-error,
or to either $\ket{100}$ or $\ket{011}$ if the error is $X \otimes I \otimes I$,
and the measurement result is $00$, $11$ or $10$, $01$, respectively, so we know which case applies.
The operation is a success, but the patient dies: we have destroyed the superposition
and lost the information in the coefficients $\alpha$ and $\beta$. This is \emph{not}
a successful decoding algorithm.} 
\end{exercise}

\subsubsection*{Measuring onto an ancilla}
A key idea in quantum error correction is that rather than directly
measuring the data state\footnote{\textbf{Trigger warning:} even the thought of this action makes the author feel uneasy.}, we first interact the data state with
a suitable ancilla, and then measure the ancilla. 
In this case a suitable ancilla is the zero qubit $\ket{0}$ and the necessary interaction can be done
using CNOT gates. 
As an indication of how to perform this interaction, the diagram below shows a plausible but
wrong circuit. Two possible input/output pairs are shown: the top is clean data, the bottom
has an $X$-error on the first qubit copied to the ancilla.
\[ \hspace*{1.5in} \Qcircuit@C=12pt @R=18pt{ & \qw & \red{X} & & \ctrl{3} & \qw &  \red{X} &  & \qw  \\
  \llap{$  \begin{matrix}  \alpha\!\ket{000} + \beta\!\ket{111} \\  \alpha\!\ket{{\red{1}}00} + \beta\!\ket{{\red{0}}11} \end{matrix} \left\{
 \phantom{ \begin{matrix} x \\ x \\ x \end{matrix}}\right.\!\!$}
 	  & \qw    & \qw & \qw & \qw  & \qw & \qw& \qw & \qw \\
 & \qw & \qw    & \qw  & \qw  & \qw & \qw & \qw & \qw\\
   \llap{$\ket{0}$\,} 
   & \qw    & \qw  & \qw   & \targ{}  & \qw & \red{X} &  & \meter } \ \raisebox{-20pt}{
 $  \begin{matrix} \alpha\!\ket{000} \ket{0} + \beta\!\ket{111}\ket{1} \\
   \alpha\!\ket{{\red 1}00} \ket{1} + \beta\!\ket{{\red 0}11}\ket{0} \end{matrix}$
   } \]  
By Definition~\ref{defn:measureZpartial}, measuring the ancilla collapses the data state: for instance
if $0$ is the measurement result that the new state is $\ket{000}\ket{0}$ if there is no error,
or $\ket{{\red 0}11} \ket{0}$ if there is an error.   
The problem is that the ancilla \emph{learns about the data state}: the measurement result determines
not just whether or not there is an $X$-error on one of the first two qubits, 
but also distinguishes the two logical codewords
$\ket{000}$ and $\ket{111}$. What if instead we
could somehow learn just about the error, and nothing about the data? (This is your final chance
to discover the key circuit for yourself.)
% using our earlier
%observation that in a healthy data state, the first two qubits are equal (in each $Z$-basis summand)?
\[ \hspace*{1.5in} \Qcircuit@C=12pt @R=18pt{ & \qw & \red{X} & & \ctrl{3} &  \qw &  \red{X} &  & \qw & \qw & \qw \\
  \llap{$  \begin{matrix}  \alpha\!\ket{000} + \beta\!\ket{111} \\  \alpha\!\ket{{\red{1}}00} + \beta\!\ket{{\red{0}}11} \end{matrix} \left\{
 \phantom{ \begin{matrix} x \\ x \\ x \end{matrix}}\right.\!\!$}
 	  & \qw    & \qw & \qw &  \qw & \qw & \qw & \qw & \ctrl{2} & \qw & \qw&  \\
 & \qw & \qw    & \qw  & \qw  & \qw & \qw & \qw & \qw & \qw & \qw \\
   \llap{$\ket{0}$\,} 
   & \qw    & \qw  & \qw   & \targ{}  & \qw & \red{X} & & \targ{} & \qw &  \red{X} &  & \meter } \ \raisebox{-24pt}{
 $ \hspace*{-0.6in} \begin{matrix} \bigl(\alpha\!\ket{000}  + \beta\!\ket{111}\bigr)\ket{0} \\
   \bigl(\alpha\!\ket{{\red 1}00} + \beta\!\ket{{\red 0}11}\bigr)\ket{\red 1} \end{matrix}$
   } \]  

In this modified circuit we also CNOT from the second data qubit to the ancilla. 
This mirrors the parity check equation $x_1 + x_2 = 0$ satisfied by the classical codewords $000$ and $111$.
In both cases, clean data and $X$-error, the output
state is a pure tensor product of the input data state (with its $X$-error if relevant)
and $\ket{b}$, where $b={\red 1}$ if there is a single $X$-error on either of the first two qubits, and $b=0$
otherwise. We may therefore measure the ancilla qubit without collapsing the data state.
By Definition~\ref{defn:measureZpartial}, the measurement result is $0$ or~${\red 1}$ \emph{and the data state
is unchanged}.

\subsubsection*{Full syndrome information on $X$-errors for the toy code}
Generally we may suppose (see \S\ref{subsec:stochastic} below) that
the $X$-error is $X^{e_1} \otimes X^{e_2} \otimes X^{e_3}$ for some $e_1$, $e_2$, $e_3 \in \F_2$.
For instance, a single $X$-error on the first qubit is  $e = 100$.
The circuit just defined (repeated in the margin) has measurement result $e_1 + e_2$. This is the syndrome 
for the first row of the parity check matrix
\marginpar{\raisebox{48pt}{$ \quad \Qcircuit @C=12pt @R=12pt { & \ctrl{3} & \qw  & \qw \\ & \qw & \ctrl{2} & \qw  \\ & \qw & \qw  & \qw
\\ \llap{$\ket{0}$\,} & \targ{}
& \targ{} &  \meter }$} }
\[ P = \left( \begin{matrix}1 & 1 & 0 \\ 0 & 1 & 1 \end{matrix} \right) \]
of the classical repetition code of length $3$. Correspondingly $Z \otimes Z  \otimes I$
is a stabiliser of our code: since $(Z\otimes Z \otimes I) \ket{000} = \ket{000}$ and $(Z\otimes Z \otimes I) \ket{111}
= \ket{111}$, we have $(Z \otimes Z \otimes I) \ket{\psi} = \ket{\psi}$ for all $\ket{\psi} \in \CC$.
A similar circuit (see the margin) in which the CNOTs
are from the second and third data qubits \marginpar{\raisebox{12pt}{$\quad \Qcircuit @C=12pt @R=12pt { & \qw & \qw  & \qw \\ & \ctrl{2} & \qw & \qw  \\ & \qw & \ctrl{1} & \qw
\\ \llap{$\ket{0}$\,} & \targ{}
& \targ{} &  \meter }$} }
measures the syndrome  for the second row. Correspondingly, $I \otimes Z \otimes Z$ is a stabiliser of $\CC$.
Therefore we
may learn $e_1 + e_2$ and $e_2 + e_3$ which, since the classical repetition code of length $3$
is $1$-error correcting, is sufficient to determine the position of any single $X$-error.

\begin{exercise}
If the two measurement results are $(1,1)$, what single gate will correct the data state?
Assume that at most one $X$-error has occurred.\footnote{\textbf{Solution:} the unique solution to $e_1 + e_2 = 1$ and $e_2 + e_3 = 1$ having $e \in \F_2^3$
of weight $1$ is $e = 010$. We therefore correct by applying an $X$-gate to the second qubit.} 
\end{exercise}

An easy generalization of this exercise shows that the toy code can correct an arbitrary single $X$-error. 
%Well done for making this far. You might prefer to stop here, or skip to \S\ref{subsec:Steane}
%about the Steane code.

\subsection{Measuring a stabiliser}\label{subsec:measuringStabiliser}
To put this in a more general framework we dualize, as in \S\ref{sec:manyQubits},
by applying Hadamard gates to everything in sight. By the
identity $(H \otimes H) \CNOT_{12} (H \otimes H) = \CNOT_{21}$, diagrammed below,
\[ \!\!\!\!\Qcircuit @C=12pt @R=12pt { & \gate{H} & \ctrl{1} & \gate{H} & \qw \\ & \gate{H} & \targ{} & \gate{H} & \qw }
 \raisebox{-12pt}{\quad =\quad\ } \raisebox{2pt}{\Qcircuit@C=12pt @R=20pt { & \targ{} & \qw \\ & \ctrl{-1} & \qw}} \]
this swaps the control and target in every CNOT gate.
The \emph{dual toy code} has codewords $\ket{+}\ket{+}\ket{+}$ and $\ket{-}\ket{-}\ket{-}$
which we abbreviate as $\ket{\pl\pl\pl}$ and $\ket{\mi\mi\mi}$ The dualized measurement circuits below
now measure syndromes of $Z$-errors.
The final measurement is, thanks to the $H$ gate
on the ancilla, correctly in the $Z$-basis; compare the two circuits ending \S\ref{subsec:QDFT}.
Note that by the copy rule for $Z$-faults, a single $Z$-fault on the data will be copied to the ancilla
wires, swapped to an $X$-fault by the Hadamard gates, and then recorded by the measurement results.
\[ \hspace*{0.5in} \Qcircuit @C=12pt @R=15pt { & \targ{} & \qw  & \qw & \qw \\ 
  \llap{$   \alpha\!\ket{\pl\pl\pl} + \beta\!\ket{\mi\mi\mi}  \left\{
 \phantom{ \begin{matrix} x \\ x \\ x \end{matrix}}\right.\!\!$}
& \qw & \targ{} & \qw& \qw   \\ & \qw & \qw  & \qw
& \qw \\ \llap{$\ket{+}$\,} & \ctrl{-3} 
& \ctrl{-2} &  \gate{H} & \meter } \hspace*{0.5in}
\raisebox{0pt}{\Qcircuit @C=12pt @R=13pt { & \qw & \qw  & \qw & \qw \\ 
%  \llap{$   \alpha\!\ket{\pl\pl\pl} + \beta\!\ket{\mi\mi\mi}  \left\{
% \phantom{ \begin{matrix} x \\ x \\ x \end{matrix}}\right.\!\!$}
& \targ{} & \qw& \qw & \qw   
\\ & \qw & \targ{}  & \qw
& \qw \\ \llap{$\ket{+}$\,} & \ctrl{-2} 
& \ctrl{-1} &  \gate{H} & \meter }}
\]
\begin{exercise}\label{ex:logicalIdentity}
The stabilisers found above of $\CC$ become stabilisers $X \otimes X \otimes I$
and $I \otimes X \otimes X$ of the dual code.
Using this, show that in the absence of errors each circuit implements the identity operation.\footnote{\textbf{Solution:} 
in the first circuit, the controlled CNOTs specify $X \otimes X \otimes I$ on
the data state, which is a stabiliser of the dual code. 
Thus, by a calculation similar
to footnote~\ref{footnote:CNOTlinear} or the following Example~\ref{ex:Zfault}, the controlled operation is the identity.  (That $X \otimes X \otimes I$ \emph{is}
a stabiliser can be seen without going through
the duality by noting that since $X \ket{+} = \ket{+}$, $X \ket{-} = -\ket{-}$, an arbitrary
data state $\alpha\!\ket{\pl\pl\pl} + \beta\!\ket{\mi\mi\mi}$ is stabilised by this circuit. 
The proof
for the second circuit is very similar.}
\end{exercise}

This exercise is the rigorous rephrasing of the property we saw is essential, that the ancilla
should `learn nothing about the data state'.

\begin{example}\label{ex:Zfault}
Suppose that 
there is a $Z$-error on the first qubit. Thus, since $Z \ket{+} = \ket{-}$, the data state
is $\ket{\psi} = \alpha\!\ket{\mi\pl\pl} + \beta\!\ket{\pl\mi\mi}$. 
To find the output of the first dualized circuit we 
could use the previous Exercise~\ref{ex:logicalIdentity} and the copy rule for $Z$-faults.
 Instead, to motivate Lemma~\ref{lemma:measureStabiliser},
we calculate
\begin{align*} 
\CNOT_{41} &\CNOT_{42}  \bigl( \alpha\!\ket{\mi\pl\pl} + \beta\!\ket{\pl\mi\mi}\bigr)
\ket{+}  \\
&= \CNOT_{41} \CNOT_{42} \bigl( \alpha\!\ket{\mi\pl\pl} + \beta\!\ket{\pl\mi\mi}\bigr)
\mfrac{1}{\sqrt{2}} \bigl( \ket{0} + \ket{1} \bigr) \\
&= \bigl( \alpha\!\ket{\mi\pl\pl} + \beta\!\ket{\pl\mi\mi}\bigr)
\mfrac{1}{\sqrt{2}} \ket{0} 
- \bigl( \alpha\!\ket{\mi\pl\pl} + \beta\!\ket{\pl\mi\mi}\bigr)
\mfrac{1}{\sqrt{2}} \ket{1} \\
&= \bigl( \alpha\!\ket{\mi\pl\pl} + \beta\!\ket{\pl\mi\mi}\bigr)
\mfrac{1}{\sqrt{2}}\bigl(  \ket{0} - \ket{1}  \bigr) \\
&= \bigl( \alpha\!\ket{\mi\pl\pl} + \beta\!\ket{\pl\mi\mi}\bigr)
\ket{-} \\
&= \ket{\psi}\ket{-}
\end{align*}
and deduce that, since $H\ket{-} = \ket{1}$,
the measurement result is $1$, with output data state $ \alpha\!\ket{\mi\pl\pl} + \beta\!\ket{\pl\mi\mi}$ unchanged
from the input. The argument for the second circuit is very similar, but in this case,
because $I \otimes X \otimes X$ commutes with $Z \otimes I \otimes I$, there is no $Z$-fault copied 
to the ancilla (see also Exercise~\ref{ex:copyUp} below; this
copying up is called `phase kick-back' in the setting of Lemma~\ref{lemma:measureStabiliser}), and so the measurement result is~$0$. Again the output data state is unchanged
from the input.
\end{example}

Observe that each $Z$-syndrome measurement
circuit is 
an instance of the general \emph{stabiliser measurement circuit}
shown  below, in which $U$ is a unitary operator on $\HH^{\otimes n}$
controlled on the bottom qubit. Exercise~\ref{ex:logicalIdentity} and Example~\ref{ex:Zfault}
 generalizes as follows.

\begin{lemma}[Measuring a stabiliser]\label{lemma:measureStabiliser}
Let $U : \HH^{\otimes n} \rightarrow \HH^{\otimes n}$ be a unitary map with eigenvalues $+1$ and $-1$.
Let $\ket{\psi} \in \HH^{\otimes n}$ be a normalized
state with unique expression $\ket{\psi} = \ket{\psi_0} + \ket{\psi_1}$
as a 
linear combination of a $+1$-eigenvector of $U$ and a $-1$-eigenvector of $U$.
The output of the circuit below
is $\ket{\psi_0}\ket{0}$ or $\ket{\psi_1}\ket{1}$
with probabilities $\braket{\psi_0}{\psi_0}$ or $\braket{\psi_1}{\psi_1}$ respectively,
and measurement results $0$ or $1$ respectively.
\[ 
 \Qcircuit @C=12pt @R=18pt {
   \llap{$\ket{\psi}$\ }  & \qw/n & \qw   & \gate{U} &  \qw &  \qw  \\
\llap{$\ket{+}$\ } & \qw   & \qw  & \ctrl{-1} &  \gate{H} &  \meter{} & %\cwx & 
} \]
\end{lemma}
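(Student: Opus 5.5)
I will follow the direct calculation of Example~\ref{ex:Zfault}, tracking the state through the three stages of the circuit: the controlled-$U$, the Hadamard on the ancilla, and then the partial measurement of Definition~\ref{defn:measureZpartial}.

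First, by linearity in the ancilla,
\[ \ket{\psi}\ket{+} = \mfrac{1}{\sqrt{2}}\bigl(\ket{\psi}\ket{0} + \ket{\psi}\ket{1}\bigr). \]
The controlled-$U$ gate is defined (as in Definition~\ref{defn:controlledGate}, now with $U$ in place of $X$) to act as the identity on $\ket{\phi}\ket{0}$ and as $U$ on the data in $\ket{\phi}\ket{1}$. So the state after it is
\[ \mfrac{1}{\sqrt{2}}\bigl(\ket{\psi}\ket{0} + U\ket{\psi}\ket{1}\bigr). \]
Substituting $\ket{\psi} = \ket{\psi_0}+\ket{\psi_1}$ and $U\ket{\psi} = \ket{\psi_0}-\ket{\psi_1}$ and regrouping, this becomes
\[ \ket{\psi_0}\ket{+} + \ket{\psi_1}\ket{-}. \]
This is the phase kick-back step: the eigenvalue $\pm1$ of $U$ has moved into the ancilla's $X$-basis state. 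Applying $H$ to the ancilla then gives $\ket{\psi_0}\ket{0} + \ket{\psi_1}\ket{1}$, using $H\ket{+}=\ket{0}$ and $H\ket{-}=\ket{1}$.

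Next I put this in the form required by Definition~\ref{defn:measureZpartial}, with $m=1$. If both $\ket{\psi_0}$ and $\ket{\psi_1}$ are nonzero, I write $\ket{\psi_b} = a_b\ket{\psi_b'}$ with $a_b = \braket{\psi_b}{\psi_b}^{1/2}$ and $\ket{\psi_b'}$ normalized. The definition then says that measurement yields $b$ with probability $|a_b|^2 = \braket{\psi_b}{\psi_b}$, and the post-measurement state is the normalized $\ket{\psi_b'}\ket{b}$. This is the claimed $\ket{\psi_b}\ket{b}$ up to normalization, matching the paper's lie that proportional states are the same physical state. If one of the $\ket{\psi_b}$ is zero, the outcome is deterministic and the same conclusion holds trivially.

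The probabilities sum to $1$ because $U$ is unitary with eigenvalues $\pm1$, so it is Hermitian and its eigenspaces are orthogonal. Hence $\braket{\psi_0}{\psi_1}=0$, and
\[ 1 = \braket{\psi}{\psi} = \braket{\psi_0}{\psi_0}+\braket{\psi_1}{\psi_1}. \]

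The main point needing care is this last step: checking that the $\pm1$-eigenspaces are orthogonal and that they span $\HH^{\otimes n}$, which is what makes the decomposition unique and the probabilities consistent. I will argue it from $U^2=I$ together with $U^{-1}=\overline{U}^{\transpose}$, which give $U = \overline{U}^{\transpose}$. Then $\mfrac{1}{2}(I\pm U)$ are orthogonal projections summing to $I$, and they project onto the two eigenspaces. The remaining steps are routine bookkeeping with linearity.
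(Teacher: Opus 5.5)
Your proposal is correct and follows essentially the same route as the paper. The paper computes the eigenstate case ($\ket{\psi}\ket{+}\mapsto\ket{\psi}\ket{\pm}$ under controlled-$U$) and extends by linearity to get $\ket{\psi_0}\ket{0}+\ket{\psi_1}\ket{1}$ before measurement, which is exactly your direct substitution, and then it likewise invokes Definition~\ref{defn:measureZpartial}. Your extra orthogonality check is harmless but not needed, since the pre-measurement state is automatically normalized by unitarity of the circuit; if you want orthogonality anyway, $\braket{\psi_0}{\psi_1} = \braket{U\psi_0}{U\psi_1} = -\braket{\psi_0}{\psi_1}$ gives it without first establishing $U^2=I$.
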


\begin{proof}
Suppose first of all that $\ket{\psi}$ is an eigenstate with eigenvalue $\pm 1$.
Then the output before the $H$ gate on the control qubit or measurement is
\[ \begin{split}
\mfrac{1}{\sqrt{2}} \ket{\psi}\ket{0} + \mfrac{1}{\sqrt{2}} U \ket{\psi}\ket{1} 
&= \mfrac{1}{\sqrt{2}} \ket{\psi}\ket{0} \pm \mfrac{1}{\sqrt{2}} U \ket{\psi}\ket{1} \\
 &\qquad = \mfrac{1}{\sqrt{2}} \ket{\psi} \bigl( \ket{0} \pm \ket{1} \bigr) 
= \ket{\psi} \ket{\pm}.
\end{split}\]
Since $H\ket{+} = \ket{0}$ and $H\ket{-} = \ket{1}$ it follows by linearity
that the state created by the circuit just before measurement 
on a general $\ket{\psi}$ is $\ket{\psi_0}\ket{0} + \ket{\psi_1}\ket{1}$.
Now by Definition~\ref{defn:measureZpartial}, measuring the final
qubit projects the state to either $\ket{\psi_0}$ or $\ket{\psi_1}$ with the claimed probabilities.
\end{proof}

\begin{remark}
Thus the stabiliser measurement circuit is the identity 
in the case where $\ket{\psi}$ is a stabiliser state of $U$,
i.e. $U\ket{\psi} = \ket{\psi}$. The name `stabiliser measurement'
is perhaps misleading, but is standard.
\end{remark}

The stabiliser measurement circuit
can be used to perform all of quantum error correction.
For instance, we just saw in Example~\ref{ex:Zfault}
how to use it with $X$-gates in place for $U$ (so the controlled gate is a composition
of CNOTs) to get syndrome information about $Z$-errors on the dual toy code: specifically, 
the solution to Exercise~\ref{ex:Zfault} shows
that $Z^{e_1} \bigl( \alpha\!\ket{\mi\pl\pl} + \beta\!\ket{\pl\mi\mi}\bigr)$ is an eigenstate of $X\otimes X \otimes I$ with eigenvalue $(-1)^{e_1}$
and of $I \otimes X \otimes X$ with eigenvalue $1$. Observe that these stabilisers correspond to the
rows of the parity check matrix $P$ of the classical repetition code of length $3$ shown\marginpar{$\qquad\quad\left ( \begin{matrix} 1 & 1 & 0 \\ 0 & 1 & 1 \end{matrix} \right) $}
in \S\ref{subsec:toyCode} and repeated in the margin.\footnote{\textbf{How it works for the toy code:} to use
the stabiliser measurement circuit to get syndrome information about
$X$-faults on the original toy code, we need to measure a $Z$-stabiliser, using gates controlled by the ancilla qubit. 
This can be done using $CZ$-gates.
In \S\ref{subsec:toyCode}, since the aim was to do everything using CNOT gates,
we instead measured the $Z$-stabiliser in a way that is less obvious, using CNOT gates 
controlled by the \emph{data} qubits. But the effect is the same. In \cite[Figure~10.15]{NielsenChuang}
this is called a `useful implication'.} As a further demonstration, in \S\ref{subsec:Steane}, we use the syndrome measurement circuit to show how to decode the Steane code and to justify our implicit assumption that 
every quantum error is a combination of the Pauli $X$- or $Z$-operators.

\begin{remark}[One circuit is all you need]
By Lemma~\ref{lemma:measureStabiliser}, and as just seen from  Exercise~\ref{ex:Zfault}, 
the stabiliser measurement circuit performs
phase estimation for eigenstates with $\pm 1$ eigenvalues. It can be generalized to perform
arbitrary phase estimation and hence, allowing ancillae on multiple qubits, the 
Quantum Discrete Fourier Transform on $\Z/2^n \Z$. A taste of this is given by the
circuits for the Quantum Discrete Fourier Transform on $\F_2^n$ at the  end of \S\ref{subsec:QDFT}; note they are upside down for this
section, having controls on the \emph{top} wires.
\end{remark}

%\begin{exercise}
%If we measurement result is ${\red 1}$, should we aim to correct the first qubit?\footnote{\textbf{Solution:}
%not necessarily, because it could be that the single $X$-error is on the second qubit.}
%\end{exercise}

\subsection{Fun with stabilisers${}^\star$} \label{subsec:GHZ}
Here we use stabilisers to show that a 
class of `locally realistic' interpretations of quantum theory are fundamentally flawed.
Please skip to \S\ref{subsec:Steane} if you want to get straight to the Steane code.
Let $\ket{\chi} = \mfrac{1}{\sqrt{2}} \bigl( \ket{000} + \ket{111} \bigr)$ be the \emph{cat state}
on three qubits.\footnote{\textbf{Etymology:} the cat state is named after Schr{\"o}dinger's cat: by
Definition~\ref{defn:measureZpartial}, 
when any single qubit is measured in the $Z$-basis, the state collapses to either $\ket{000}$ or $\ket{111}$} 
Since $\ket{\chi}$ lies in the toy code it is stabilised 
by $Z \otimes Z \otimes I$ and $I \otimes Z \otimes Z$. %No room for of course can be seen directly
Clearly $X \otimes X \otimes X$
is also a stabiliser. Since $Y = iXZ$, it follows that 
\[ (X \otimes X \otimes X)(I \otimes Z \otimes Z) = X \otimes \mfrac{1}{i} Y \otimes \mfrac{1}{i} Y = -X \otimes Y \otimes Y \]
is a stabiliser. Noting the minus sign on the right-hand side, we deduce that $\ket{\chi}$ 
is a $-1$-eigenvector for $X \otimes Y \otimes Y$.
Physically, this means that if three qubits are entangled in the cat state, and then
measured in the $x$-, $y$- and $y$-directions, the product of the measured eigenvalues is $-1$.
In a strong version of a 
\emph{locally realistic} interpretation of quantum theory, there
are hidden variables $x_1$, $x_2$, $x_3$, $y_1$, $y_2$, $y_3$ which are the deterministic, `known to God',
results of measuring the $3$ qubits in the $x$- and $y$-directions.
Thus $x_1 y_2y_3 = -1$. By symmetry, we also have $y_1x_2y_3 = -1$ and $y_1y_2x_3 = -1$.
Hence
\[ x_1x_2x_3 = (x_1y_2y_3)(y_1x_2y_3)(y_1y_2x_3) = (-1)^3 = -1. \]
According to the locally realistic interpretation, this means that the result of measuring
the $3$-qubits all in the $x$-direction is $-1$. But since $X \otimes X \otimes X$ is a stabiliser
of $\ket{\chi}$, in fact the result is $1$.\footnote{\textbf{Further remarks:}
This \emph{GHZ-experiment} is an improvement on the earlier `Bell's paradox' experiment using two qubits
entangled in the Bell state.
This
demonstrates a higher correlation between single qubit measurements that is possible with classical
physics, but it is still conceivable (just overwhelmingly improbable) that a locally realistic theory
gives the same results in repeated experiments. 
In contrast, the GHZ-experiment is `one-shot': it has been performed in
the laboratory using spatially separated qubits (so the hidden variables for qubit $2$
cannot `update' themselves after qubit $1$ is measured, but just before qubit $2$ is measured, without
breaking causality). The account above is a simplified version of \cite{MerminWhatsWrong}.
See also \cite[Experiment 8, page 29]{Maudlin}. Incidentally,
Maudlin is clear that my interpretation of
`local realism' is unhelpful and too strong: `Bell
proved that no local theory, full stop, can predict violations of his inequality. \ldots
If I had my druthers ``realist'' and ``anti-realist'' would be banned from these foundational
discussions.' \cite[page~xiii]{Maudlin}}

\subsection{The Steane $[[7,1,3]]$-code} \label{subsec:Steane}
In \S\ref{subsec:toyCode} we made a quantum code whose $Z$-stabilisers
$I \otimes I \otimes I$, $Z \otimes Z \otimes I$, $Z \otimes I \otimes Z$,
$I \otimes Z \otimes Z$ corresponding to rows in the row-space of the parity
check matrix $P$ of the repetition code of length $3$. 
The dual code has $X$-stabilisers obtained by replacing each $Z$ with $X$, and these were what we measured
in the previous subsection. 

\subsubsection*{A stabiliser code}
We now copy this strategy replacing $P$ 
with the parity check matrix of the Hamming $[[7,4,3]]$-code below, chosen
so that for each $i \in \{1,\ldots, 7\}$, column $i$ is the binary form of $i$.
\[ P = \left( \begin{matrix} 1 & 0 & 1 & 0 & 1 & 0 & 1\\ 0 & 1 & 1 & 0 & 0 & 1 & 1 \\ 0 & 0 & 0 & 1 & 1 & 1 & 1
\end{matrix} \right) \]
Given $u \in \F_2^7$, let $X^u = X^{u_1} \otimes \cdots \otimes X^{u_7}$ and similarly
let $Z^w = Z^{w_1} \otimes \cdots \otimes Z^{w_7}$. The Steane code is then
the subspace $\CC$ of $\HH^{\otimes 7}$ of all vectors fixed by each $X^u$ 
and $Z^w$ for $u$, $w \in \langle P \rangle_\mathrm{row}$, the row span of $P$.
Define the \emph{zero-logical} and \emph{one-logical} states by
\begin{align*}
\ket{0}_L &= \frac{1}{\sqrt{2^3}}\sum_{v \in \langle P \rangle_\mathrm{row}} \ket{v} \\
\ket{1}_L &= \frac{1}{\sqrt{2^3}}\sum_{v \in \langle P \rangle_\mathrm{row}} \ket{\overline{v}} \end{align*}
where $\overline{v} = v + 1111111$ is the bit flip of $v$.
(The normalization factor is $\sqrt{2^3}$ because $P$ has three
linearly independent rows, and so $\langle P\rangle_\mathrm{row}$ has $2^3$ elements.)
Since $X^u \ket{v} = \ket{u+v}$ for all $u$, $v\in \F_2^7$,
it is clear that $\ket{0}_L$ and $\ket{1}_L$ are fixed by the $X$-stabiliser
subgroup. We leave it as an extended exercise using Lemma~\ref{lemma:transverseHadamard}
to show that $\ket{0}_L$ and $\ket{1}_L$ are also fixed by the $Z$-stabiliser group
and so $\ket{0}_L$, $\ket{1}_L \in \CC$. With a bit more work,
this approach shows that $\CC$ is exactly the $2$-dimensional subspace of $\HH^{\otimes 7}$
spanned by the logical codewords $\ket{0}_L$ and~$\ket{1}_L$.
Alternatively this follows from the general theory of CSS codes\footnote{\textbf{Reference:}
see \cite[\S 10.4.2]{NielsenChuang} or, for a mathematician friendly introduction to CSS
codes and stabiliser codes going into much more detail than these notes, \cite[Ch.~7]{Preskill}.}
noting that $\langle P \rangle^\perp_\mathrm{row} = \langle P \rangle_\mathrm{row} 
\oplus \langle 1111111 \rangle$ and so 
$\langle P \rangle^\perp_\mathrm{row} / \langle P \rangle_\mathrm{row}$
consists of the two cosets
 $\langle P \rangle_\mathrm{row}$ and $1111111 + \langle P \rangle_\mathrm{row}$. Thus $\CC$ uses $7$ physical qubits to encode
$1$ logical qubit. 

\subsubsection*{Syndrome extraction}
The  circuit below measures the $X$-stabiliser for each row of $P$ as in Lemma~\ref{lemma:measureStabiliser}
\[ \qquad\qquad\Qcircuit @C=1em @R=0.4em {
    % Data Wire 1
    & \targ & \qw   & \qw   & \qw   & \qw   & \qw   & \qw   & \qw   & \qw   & \qw   & \qw   & \qw   & \qw \\
    % Data Wire 2
    & \qw   & \qw   & \qw   & \qw   & \targ & \qw   & \qw   & \qw   & \qw   & \qw   & \qw   & \qw   & \qw \\
    % Data Wire 3
    & \qw   & \targ & \qw   & \qw   & \qw   & \targ & \qw   & \qw   & \qw   & \qw   & \qw   & \qw   & \qw \\
    % Data Wire 4
       \llap{$   \alpha\!\ket{0}_L + \beta\!\ket{1}_L  \left\{
 \phantom{ \begin{matrix} x \\ x \\ x \\ x \\ x \\ x\end{matrix}}\right.\!\!$}  & \qw   & \qw   & \qw   & \qw   & \qw   & \qw   & \qw   & \qw   & \targ & \qw   & \qw   & \qw   & \qw \\
    % Data Wire 5
    & \qw   & \qw   & \targ & \qw   & \qw   & \qw   & \qw   & \qw   & \qw   & \targ & \qw   & \qw   & \qw \\
    % Data Wire 6
    & \qw   & \qw   & \qw   & \qw   & \qw   & \qw   & \targ & \qw   & \qw   & \qw   & \targ & \qw   & \qw \\
    % Data Wire 7
    & \qw   & \qw   & \qw   & \targ & \qw   & \qw   & \qw   & \targ & \qw   & \qw   & \qw   & \targ & \qw \\
  \llap{$\ket{+}$\,}  & \ctrl{-7} & \ctrl{-5} & \ctrl{-3} & \ctrl{-1} & \qw   & \qw   & \qw   & \qw   & \qw   & \qw   & \qw   & \qw   & \qw & \gate{H} & \meter \\
  \llap{$\ket{+}$\,}  & \qw   & \qw   & \qw   & \qw   & \ctrl{-7} & \ctrl{-6} & \ctrl{-3} & \ctrl{-2} & \qw   & \qw   & \qw   & \qw   & \qw & \gate{H} & \meter \\
  \llap{$\ket{+}$\,}  & \qw   & \qw   & \qw   & \qw   & \qw   & \qw   & \qw   & \qw   & \ctrl{-6} & \ctrl{-5} & \ctrl{-4} & \ctrl{-3} & \qw & \gate{H} & \meter \\
}
\]
Dualizing this circuit (so going in the opposite direction to the passage from the toy code to the
dual toy code) we get a circuit as in \S\ref{subsec:toyCode}, still using only CNOT gates, that measures the $Z$-stabilisers.
\[ \qquad\!\!\Qcircuit @C=1em @R=0.95em {
    % Data Wire 1
    & \ctrl{7} & \qw  & \qw  & \qw  & \qw  & \qw  & \qw  & \qw  & \qw  & \qw  & \qw  & \qw  & \qw \\
    % Data Wire 2
    & \qw  & \qw  & \qw  & \qw  & \ctrl{7} & \qw  & \qw  & \qw  & \qw  & \qw  & \qw  & \qw  & \qw \\
    % Data Wire 3
    & \qw  & \ctrl{5} & \qw  & \qw  & \qw  & \ctrl{6} & \qw  & \qw  & \qw  & \qw  & \qw  & \qw  & \qw \\
    % Data Wire 4
    \llap{$   \alpha\!\ket{0}_L + \beta\!\ket{1}_L  \left\{
 \phantom{ \begin{matrix} x \\ x \\ x \\ x \\ x \\ x\end{matrix}}\right.\!\!$}  & \qw  & \qw  & \qw  & \qw  & \qw  & \qw  & \qw  & \qw  & \ctrl{6} & \qw  & \qw  & \qw  & \qw \\
    % Data Wire 5
    & \qw  & \qw  & \ctrl{3} & \qw  & \qw  & \qw  & \qw  & \qw  & \qw  & \ctrl{5} & \qw  & \qw  & \qw \\
    % Data Wire 6
    & \qw  & \qw  & \qw  & \qw  & \qw  & \qw  & \ctrl{3} & \qw  & \qw  & \qw  & \ctrl{4} & \qw  & \qw \\
    % Data Wire 7
    & \qw  & \qw  & \qw  & \ctrl{1} & \qw  & \qw  & \qw  & \ctrl{2} & \qw  & \qw  & \qw  & \ctrl{3} & \qw \\
   \llap{$\ket{0}$\,}  & \targ    & \targ    & \targ    & \targ    & \qw  & \qw  & \qw  & \qw  & \qw  & \qw  & \qw  & \qw  & \qw & \meter \\
   \llap{$\ket{0}$\,}  & \qw  & \qw  & \qw  & \qw  & \targ    & \targ    & \targ    & \targ    & \qw  & \qw  & \qw  & \qw  & \qw & \meter \\
   \llap{$\ket{0}$\,}  & \qw  & \qw  & \qw  & \qw  & \qw  & \qw  & \qw  & \qw  & \targ    & \targ    & \targ    & \targ    & \qw & \meter \\
}
\]

\begin{exercise}
Suppose that the results of $X$-stabiliser measurement are $(1,0,1)$ and the results of $Z$-stabiliser
measurement are $(0,1,0)$. Assuming that at most
one $X$-error and at most one $Z$-error have occurred, what correction would you impose on the 
data state?\footnote{\textbf{Solution:} Since $(1,0,1)$ and $(0,1,0)$ are columns $5$ and $2$ of $P$ respectively,
and $X$-stabilisers give syndrome information about $Z$-errors (and \emph{vice-versa}), 
the correction is $X_2Z_5$.}
\end{exercise}

It is a routine generalization from this exercise to see that $\CC$ is $1$-error correcting.
It is therefore a $[[7, 1, 3]]$-quantum code.
In fact, writing $P_i$ for a Pauli operator on qubit $i$, where $P$ is either $X$ or $Z$,
there is a direct
sum $\mathcal{D} = \CC \oplus X_1 \CC \oplus \cdots \oplus X_7 \CC$
and then, by dimension counting using that $\dim \HH^{\otimes 7} = 2^7$ while $\dim \mathcal{D} = 8 \times 2
= 16 = 2^4$, we have
$\HH^{\otimes 7} = \mathcal{D} \oplus Z_1 \mathcal{D} \oplus \cdots \oplus Z_7 \mathcal{D}$:
the reason from quantum error correction that
the subspaces form a direct sum is because they have different syndromes. More mathematically,
it is because they are distinct joint eigenspaces for the $X$- and $Z$-stabiliser subgroups. 
 This is the quantum analogue
of the classical fact that the Hamming $[7,4,3]$-code is perfect, that is,
the Hamming balls of radius $1$ about codewords partition~$\F_2^7$.

\begin{exercise}\label{ex:copyUp}
Suppose that the data state is error free, but 
a $Z$-fault occurs on the first ancilla qubit 
after the first two CNOT gates. What are the $X$-stabiliser
and $Z$-stabiliser measurements (supposing we measure the 
$X$-stabilisers first)? What is the result of error correction?
Conclude that in this case `the cure was worse than the disease'.
\footnote{\textbf{Solution}: by the
copy rules in \S\ref{subsec:copyRules} the
$Z$-fault copies up to a $Z$-error $Z^{0000101} = I \otimes
I \otimes I \otimes I \otimes Z \otimes I \otimes Z$ on the data.
This would have been detected by the $X$-stabiliser measurements,
except that they have already been performed. Thus, since
the data state began error-free, all stabiliser measurements
are $0$ and no correction is imposed. But a single $Z$-fault
in the process became a weight $2$-error. Such `explosive' faults
mean that the naive version of Shor-style error correction 
in this section is impractical. There are ways around the problem,
at the cost of using more complicated ancillae, but they are beyond
the scope of these notes. See for instance Exercise 10.73 in \cite{NielsenChuang}.}
\end{exercise}

\subsubsection*{Making $\ket{0}_L$}
Recall from \S\ref{sec:twoQubits} that a CNOT gate is a device for creating entanglement:
for instance this was seen in Exercise~\ref{ex:makeBell}.
\marginpar{\setlength\arraycolsep{3pt}\scalebox{0.8}{$\left( \begin{matrix} \mathbf{1} & 0 & 1 & 0 & 1 & 0 & 1\\ 0 & \mathbf{1} & 1 & 0 & 0 & 1 & 1 \\ 0 & 0 & 0 & \mathbf{1} 
& 1 & 1 & 1 
\end{matrix} \right)$}}
The highlighted pivot columns in the Steane code $P$-matrix shows that qubit $1$ should be entangled with qubits $3,5,7$;
qubit $2$ should be entangled with qubits $3,6,7$ and qubit $4$ should be entangled with qubits $5,6,7$.
This should motivate the CNOT circuit in Figure~\ref{fig:SteaneZeroLogical}  that prepares $\ket{0}_L$ in the Steane code.

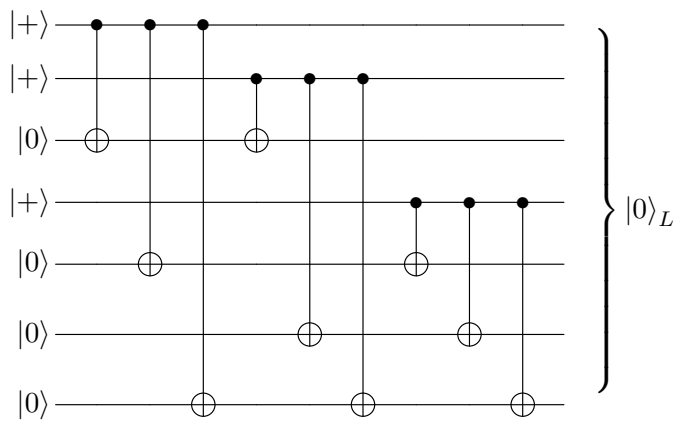
\begin{figure}[h!]
\[ \qquad\qquad\Qcircuit @C=1em @R=1.6em {
  \llap{$\ket{+}$\,} & \ctrl{2} & \ctrl{4} & \ctrl{6}  & \qw      & \qw      & \qw      & \qw      & \qw      & \qw      & \qw \\
  \llap{$\ket{+}$\,} & \qw      & \qw      & \qw       & \ctrl{1} & \ctrl{4} & \ctrl{5} & \qw      & \qw      & \qw      & \qw \\
  \llap{$\ket{0}$\,} & \targ{}  & \qw      & \qw       & \targ{}  & \qw      & \qw      & \qw      & \qw      & \qw      & \qw \\
  \llap{$\ket{+}$\,} & \qw      & \qw      & \qw       & \qw      & \qw      & \qw      & \ctrl{1} & \ctrl{2} & \ctrl{3} & \qw \\
  \llap{$\ket{0}$\,} & \qw      & \targ{}  & \qw       & \qw      & \qw      & \qw      & \targ{}  & \qw      & \qw      & \qw \\
  \llap{$\ket{0}$\,} & \qw      & \qw      & \qw       & \qw      & \targ{}  & \qw      & \qw      & \targ{}  & \qw      & \qw \\
  \llap{$\ket{0}$\,} & \qw      & \qw      & \targ{}   & \qw      & \qw      & \targ{}  & \qw      & \qw      & \targ{}  & \qw }
  \raisebox{-70pt}{$\ \left.\phantom{\begin{matrix}x \\ x \\ x \\ x \\ x \\ x \\ x \\ x \\ x \\ x \end{matrix}}\right\}\ket{0}_L$}
\]
\caption{Preparation circuit for zero logical $\ket{0}_L$ in the Steane code.\label{fig:SteaneZeroLogical}}
\end{figure}

The calculation below shows the effect
of each group of three CNOT gates on a $Z$-basis input state:
\begin{align*} \ket{b_1}\ket{b_2}\ket{0}\ket{b_4}&\ket{0}\ket{0}\ket{0} \\
&\mapsto 
\ket{b_1}\ket{b_2}\ket{b_1}\ket{b_4}\ket{b_1}\ket{0}\ket{b_1} \\
&\mapsto
\ket{b_1}\ket{b_2}\ket{b_1+b_2}\ket{b_4}\ket{b_1}\ket{b_2}\ket{b_1+b_2} \\
&\mapsto
\ket{b_1}\ket{b_2}\ket{b_1+b_2}\ket{b_4}\ket{b_1+b_4}\ket{b_2+b_4}\ket{b_1+b_2+b_4}.
\end{align*}
The result now follows by linearity (as in Exercise~\ref{ex:makeBell},
which is the analogous result for the smaller $P$-matrix $(1\,1)$), using that the input state is 
\[ \ket{+}\ket{+}\ket{0}\ket{+}\ket{0}\ket{0}\ket{0}
= \mfrac{1}{2^{3/2}}  \sum_{b_1,b_2,b_4 \in \{0,1\}} \ket{b_1}\ket{b_2}\ket{0}\ket{b_4}\ket{0}\ket{0}\ket{0},\]
and that, since columns $1$, $2$ and $4$ are pivot columns, a general element in $\langle P \rangle_\mathrm{row}$ is $(b_1,b_2,b_1+b_2,b_4,b_1+b_4,b_2+b_4,b_1+b_2+b_4)$.

\begin{exercise}
Use the fault pushing rules as in Exercise~\ref{ex:makeBellPushing} 
to give an alternative proof that the output of the circuit has $X$-stabiliser group given by the row span of $P$
and so is $\ket{0}_L$.
[\emph{Hint:} you are encouraged to use the shortcut that the output of any CNOT network is a CSS state, 
of the form $2^{-\dim M /2} \sum_{v \in M} \ket{v}$ for some matrix $M$.]\footnote{\textbf{Solution:}
$X$-faults on the qubits initialized $\ket{+}$ copy to the rows of the $P$-matrix;
thus the $X$-stabiliser group of the input, namely $\langle X_1, X_2, X_4 \rangle$ is conjugated
to $\bigl\langle X^u : u \in \langle P \rangle_\mathrm{row} \bigr\rangle$, which by the 
hint implies that the output state is $\ket{0}_L$.}
\end{exercise}

\begin{exercise}
Show that transverse $X$, i.e.~$X_1X_2X_3X_4X_5X_6X_7 = X \otimes X \otimes X \otimes X \otimes X \otimes X \otimes X$
is the $X$-logical operation for the Steane code that swaps $\ket{0}_L$ and $\ket{1}_L$,
and hence find a circuit preparing $\ket{1}_L$. Hence, or otherwise,
find a circuit preparing a general $\alpha\!\ket{0} + \beta\!\ket{1}$. [\emph{Hint:} 
conjugate transverse $X$ back through the circuit.]\footnote{\textbf{Solution:}
by definition 
\[ \begin{split}
\ket{1}_L \!=\! \sum_{v \in \langle P \rangle_\mathrm{row}} \!\!\ket{\overline{v}} =
 \sum_{v \in \langle P \rangle_\mathrm{row}}\!\! \ket{v+1111111} =
 X \otimes \cdots \otimes X
\sum_{v \in \langle P \rangle_\mathrm{row}} %(X \otimes X \otimes X \otimes X \otimes X \otimes X \otimes X) 
\ket{v} =  X \otimes \cdots \otimes X \ket{0}_L. \end{split} \]
Therefore applying a final transverse $X$-operation, i.e.~$X$ gates on all $7$ qubits, to the preparation
circuit for $\ket{0}_L$ makes $\ket{1}_L$. Conjugating this back through the circuit
by the copy rules in \S\ref{subsec:copyRules} gives $X_1X_2X_3X_4X_5X_6= X \otimes X \otimes X \otimes
X \otimes X \otimes X \otimes I$. It follows that the image of
$\ket{+}\ket{+}(\alpha\!\ket{0} + \beta\ket{1})\ket{+}\ket{0}\ket{0}\ket{0}$
under the composition of $\CNOT_{31}\CNOT_{32}\ldots \CNOT_{36}$ followed by the preparation
circuit making $\ket{0}_L$ is $\alpha\!\ket{0}_L + \beta\!\ket{1}_L$, as shown in the diagram below.
\[ \scalebox{0.7}{$\Qcircuit @C=1em @R=1.5em {
  \llap{$\ket{+}$\,} & \targ{}  & \qw & \qw & \qw &  \qw & \ctrl{2} & \ctrl{4} & \ctrl{6}  & \qw      & \qw      & \qw      & \qw      & \qw      & \qw      & \qw \\
  \llap{$\ket{+}$\,} & \qw &  \targ{}   & \qw & \qw& \qw   & \qw   & \qw      & \qw       & \ctrl{1} & \ctrl{4} & \ctrl{5} & \qw      & \qw      & \qw      & \qw \\
  \llap{$\alpha\!\ket{0}+\beta\!\ket{1}$\,} & \ctrl{-2} & \ctrl{-1} & \ctrl{1} & \ctrl{2} & \ctrl{3}  & \targ{}  & \qw      & \qw       & \targ{}  & \qw      & \qw      & \qw      & \qw      & \qw      & \qw \\
  \llap{$\ket{+}$\,} & \qw & \qw & \targ{}  & \qw & \qw & \qw      & \qw      & \qw       & \qw      & \qw      & \qw      & \ctrl{1} & \ctrl{2} & \ctrl{3} & \qw \\
  \llap{$\ket{0}$\,} & \qw & \qw & \qw & \targ{}  & \qw  & \qw     & \targ{}  & \qw       & \qw      & \qw      & \qw      & \targ{}  & \qw      & \qw      & \qw \\
  \llap{$\ket{0}$\,} & \qw & \qw & \qw & \qw & \targ{}  & \qw      & \qw      & \qw       & \qw      & \targ{}  & \qw      & \qw      & \targ{}  & \qw      & \qw \\
  \llap{$\ket{0}$\,} &\qw & \qw & \qw &  \qw & \qw  & \qw      & \qw      & \targ{}   & \qw      & \qw      & \targ{}  & \qw      & \qw      & \targ{}  & \qw }
  \raisebox{-64pt}{$\ \left.\phantom{\begin{matrix}x \\ x \\ x \\ x \\ x \\ x \\ x \\ x \\ x \\ x \\ x \\ x \end{matrix}}\right\}\alpha\!\ket{0}_L + \beta\!\ket{1}_L$}$}
\]

\smallskip\noindent
See \cite[\S 4.2]{Gottesman} for generalizations of this. Another important way to encode uses an auxiliary qubit
in state $\alpha\!\ket{0} + \beta\!\ket{1}$; a transverse CNOT with this qubit as the control with target $\ket{0}_L$ gives
the state $\alpha\!\ket{0}\ket{0}_L + \beta\!\ket{1}\ket{1}_L$; now apply a Hadamard gate to the top qubit
to get 
\[ \alpha\!\ket{+}\ket{0}_L + \beta\!\ket{-}\ket{1}_L = \mfrac{1}{\sqrt{2}}\ket{0}\bigl(\alpha\ket{0}_L + \beta\ket{1}_L\bigr)
+ \mfrac{1}{\sqrt{2}}\ket{1}\bigl(\alpha\ket{0}_L - \beta\ket{1}_L\bigr).\]
Thus measuring the top qubit in the $Z$-basis gives the required state if the measurement result is $0$.
If the measurement result is $1$ then transverse $Z$ must be applied.
}
\end{exercise}

One motivation for making $\ket{0}_L$ is that this
is the ancilla state for Steane style
error correction of $Z$-errors. For more on this
see \cite{Steane}.

\subsection{Justifying the stochastic $X$- and $Z$-error model}\label{subsec:stochastic}
We finish with an example justifying our implicit assumption
that quantum errors appear as Pauli operators on one or more  qubits.

\begin{example}
The data state in the Steane code was
\[ \ket{-}_L = \mfrac{1}{\sqrt{2}}\bigl( \ket{0}_L - \ket{1}_L\bigr)\]
until a rogue environmental electron got entangled with it, forming the new state on $8$ qubits
\[ \ket{\phi} = \mfrac{4}{5} \ket{0}_\mathrm{env}\ket{-}_L  + \mfrac{3}{5} X_2 \ket{1}_\mathrm{env}\ket{-}_L . \]
Here we use the notation just introduced, that $X_2$ is a Pauli $X$ on the second qubit.
Measuring the first $Z$-stabiliser $Z_1Z_3Z_5Z_7$, corresponding to the first row $1010101$ of $P$,
leaves~$\ket{\phi}$ unchanged.
But since $Z_2Z_3Z_6Z_7$, corresponding to the second row $0110011$ of $P$, anticommute with $X_2$,
the state in the first syndrome extraction circuit above \emph{just before measurement} is
\[ \mfrac{4}{5} \ket{0}_\mathrm{env} \ket{-}_L \ket{0} + \mfrac{3}{5} X_2\ket{1}_\mathrm{env} \ket{-}_L
\ket{1}. \]
By Definition~\ref{defn:measureZpartial}, measuring the ancilla qubit gives output
state $\ket{0}_\mathrm{env}\ket{-}_L$ with probability $\mfrac{16}{25}$ (result is $0$)
and output state $\ket{1}_\mathrm{env}X_2\ket{-}_L$ with probability $\mfrac{9}{25}$ (result is $1$). Note that the environment state is no longer entangled with the data state, and that after the final $Z$-stabiliser $Z_4Z_5Z_6Z_7$, corresponding
to the third row $0001111$ of $P$, is measured,  we, or rather
the quantum computer, now know what $X$-correction to impose on the data state.
\end{example}

More generally, one can write an arbitrary quantum error as a linear combination of the Pauli
matrices and use a generalization of this example to show that the process of stabiliser measurement
(or equivalently, syndrome extraction) forces the error to decohere into a specific combination
of Pauli operators.
Thus it is \emph{not} a law of nature that quantum errors manifest as $X$- and $Z$-faults on the data:
rather this is an emergent property of our decoding logic.\footnote{\textbf{In practice:} everyone
seems to pretend that nature is sufficiently obliging so as to create errors in the mathematically
convenient fashion, and thanks in  part to the principle of deferred measurement, we get
away with this.} It is a notable feature of the example above that the entanglement
needed only one extra qubit, and its measurement is something that happens deep inside our
quantum computer, showing that the theory that any collapse of the wave function requires a conscious
observer is misconceived. A quantum computer with functioning error correction
will be the strongest test to date that 
the view of  quantum theory, informally presented in these notes, 
with its strange mixture of unitary evolution punctuated by projections collapsing the quantum state
onto an eigenbasis of a Hermitian operator, 
is an accurate model for how the universe works.\footnote{\textbf{Reference:} 
it could be that quantum theory is merely a remarkably accurate model or --- and I think
this is the mathematically appealing alternative --- that quantum states
really are the basic `ontic' elements of reality. See \cite{Maudlin} for discussion of this
question.}

\subsection*{Acknowledgements}
I thank Michael D., Alastair Kay, and Tony Skyner for helpful comments on an earlier version of these notes.

\def\cprime{$'$} \def\Dbar{\leavevmode\lower.6ex\hbox to 0pt{\hskip-.23ex
  \accent"16\hss}D} \def\cprime{$'$}
\providecommand{\bysame}{\leavevmode\hbox to3em{\hrulefill}\thinspace}
\providecommand{\MR}{\relax\ifhmode\unskip\space\fi MR }
% \MRhref is called by the amsart/book/proc definition of \MR.
\providecommand{\MRhref}[2]{%
  \href{http://www.ams.org/mathscinet-getitem?mr=#1}{#2}
}
\providecommand{\href}[2]{#2}

\end{document}